\documentclass[11pt]{article}
\usepackage[utf8]{inputenc}
\usepackage[T1]{fontenc}
\usepackage{lmodern}
\usepackage[margin=1.3in]{geometry}
\usepackage{amsmath,amssymb,amsthm,amsfonts,mathtools}
\usepackage{graphicx}
\usepackage{xcolor}
\usepackage{tikz}
\usetikzlibrary{arrows.meta,positioning,calc}
\usepackage{enumitem}
\usepackage[colorlinks=true,linkcolor=blue,citecolor=blue,urlcolor=blue]{hyperref}
\usepackage{dsfont}
\usepackage{natbib}

\newtheorem{theorem}{Theorem}[section]
\newtheorem{proposition}[theorem]{Proposition}
\newtheorem{lemma}[theorem]{Lemma}
\newtheorem{corollary}[theorem]{Corollary}
\newtheorem{assumption}[theorem]{Assumption}
\theoremstyle{definition}

\theoremstyle{remark}
\newtheorem{remark}[theorem]{Remark}

\newcommand{\E}{\mathbb{E}}
\newcommand{\Prob}{\mathbb{P}}
\newcommand{\1}{{1}}
\newcommand{\dd}{\mathrm{d}}
\newcommand{\R}{\mathbb{R}}
\newcommand{\given}{\,\vert\,}
\newcommand{\Z}{\mathcal{Z}}
\newcommand{\Xset}{\mathcal{X}}
\newcommand{\Sset}{\mathcal{S}}
\newcommand{\as}{\xrightarrow{\mathrm{a.s.}}}

\newcommand{\dist}{\xrightarrow{\mathcal{D}}}

\title{Payment Process Estimation in Aggregated Insurance Models}

\author{
Martin Bladt$^{1}$ \and
Marcus Christiansen$^{2}$
}

\date{
\small
$^{1}$ Faculty of Mathematical Sciences, University of Copenhagen \\
$^{2}$ Institute of Mathematics, Carl von Ossietzky University of Oldenburg
}

\begin{document}
\maketitle

\begin{abstract}
Insurance payments may depend on latent micro states although only macro states and realized payments are observed. We study a sojourn-payment model for such aggregated multi-state systems under left-truncation and right-censoring. Starting from a micro-to-macro projection, we establish strong consistency and weak convergence for inverse-probability-weighted estimators of state-specific cumulative payment processes. 
\end{abstract}

\section{Introduction}\label{sec:intro}

In life and health insurance, it is common to explain the individual insurance payments from a multistate  model for the insured. In simple life insurances,  the multistate model can fully explain the insurance benefits, so it suffices to statistically estimate this multistate model only. However, if disability benefits or sickness benefits are covered, or even medical bills are paid by the insurer, then fully explanatory multistate models can be so highly complex that their statistical estimation is unfeasible. In this case, aggregated insurance models are commonly used, which  distinguish only a reduced set of macro states and average over the micro states. 

For example, consider German private health insurance policies, which are long-term contracts calculated in a similar way to life insurance policies. The true insurance benefits depend on the huge number of different health states that the insured person may experience, so it is extremely challenging to statistically estimate the micro-level transition rates. Instead, insurers only distinguish the macro states `active', `lapsed', and `dead', cf.~\citet{MilbrodtRoehra2016}.

A second example is disability protection, which, in many countries, is spread across several layers, including public benefits, occupational pension schemes and private insurance. The interplay between the various levels can be highly complex. This is why actuaries often aggregate different disability statuses into a single macro state, avoiding micro-level modelling. See, for example, \citet{FurrerSandqvist2025} for the situation in Denmark.

If the macro-level model does not fully explain the insurance payments process and the micro-level model is inaccessible, the aggregated insurance payment functions must be estimated directly from the data. This paper introduces a general framework for this statistical task and establishes the desirable asymptotic properties of the estimators.   The combination of micro-level dependent insurance payments and the unavailability of transition rates at the micro-level has not yet been adequately addressed in the existing literature.

\citet{AhmadBladtFurrer2023} and \citet{AhmadBladt2024} examine aggregated life insurance models in terms of the potential loss of the Markov property when switching from the micro- to the macro-level, in situations where the payment process depends solely on the macro states.  \citet{Andersen2024} study payment processes that  depend on the micro states, focusing on the computational efficiency of the calculation and assuming that the transition rates at the micro-level are known.
Micro-level modelling can be driven not only by the insured person’s health conditions, but also by the behaviour of the policyholder, cf.~\citet{MilhaudDutang2018}. \citet{ReckSchuppReuss2025} investigate algorithms for the efficient aggregation of behaviour-dependent micro-levels, discussing  the transition rates but not the payment processes.

This paper focuses on payment processes involving sojourn payments only. This reflects the structure of our real-life examples and simplifies the statistical problem. Future research should also consider transition payments. From a mathematical point of view, the empirical process techniques of~\citet{BladtFurrer2025} are the main tools upon which we base our analysis.  Earlier related work in survival analysis is vast, and we confine ourselves to highlight \citet{RamlauHansen1988} and \citet{LintonNielsen1995} for intensity estimation, and the general treatment of counting processes in~\citet{AndersenBorganGillKeiding2012}; the former references offer a promising route when considering the extension to more general payment processes.

The paper is structured as follows:
In Section \ref{sec:setup}, we introduce the modeling framework for  micro-level payment processes and prove the existence of unique projections  to the macro-level. Section \ref{sec:unconditional_framework} adds delayed entry and right-censoring to the payment processes. In Section \ref{sec:estimators} we introduce estimators for the macro level payment functions. We prove strong consistency in section \ref{subsec:strong_con}  and asymptotic normality in section \ref{subsec:weak_markov} under reasonable technical assumptions. Section \ref{sec:numerical_examples} provides a numerical example.

\section{Micro-level payment processes and macro-level projection}\label{sec:setup}

Insurance payments may depend on features finer than the contractual states used for premium calculation and reporting. 
We model this by separating latent micro states from observed macro states and then projecting the payment mechanism to macro level.
Thus, consider an individual insurance policy governed by a c\`adl\`ag jump process
\begin{align*}
Z=(X,S),\quad Z(t)\in\Z:=\Xset\times\Sset,
\end{align*}
on a finite state space. Here $X$ represents macro states and $S$ represents micro states. We assume that $Z(0-)=Z(0)$. 

We consider insurance payment processes on a finite time interval $[0,T]$. 
We assume that the actual payments between the insurer and the policyholder can be described by  a payment process $A$ on $[0,T]$ of the form
\begin{equation}\label{eq:A_decomp}
A(t)
=
\sum_{(i,e)\in\Z} \int_{[0,t]} \1_{\{Z(s-)=(i,e)\}}\,\dd A_{(i,e)}(s)
\end{equation}
with right-continuous functions $A_{(i,e)}$ of finite variation. Thus, payments only accumulate while the policy remains in a micro state. This is the setting studied throughout the paper. Our assumptions imply that $A$ has càdlàg paths of finite variation.

Since we focus on applications in which micro-level modeling is not really feasible, we are looking for a suitable projection of $A$ onto a macro-level process $B$ of the form
\begin{equation}\label{eq:B_decomp}
 B(t)
=
\sum_{i\in\Xset}\int_{[0,t]} \1_{\{X(s-)=i\}}\,\dd B_i(s)
\end{equation}
with right-continuous functions $B_i$ of finite variation. 

The following theorem shows that a suitable projection exists and is almost surely unique. The theorem needs the  state-specific mean cumulative payment functions
\begin{align}
	G_i(t)&:=\E\left[\int_{[0,t]}\1_{\{X(s-)=i\}}\dd A(s)\right]
	\label{eq:uncens_payments}
\end{align}
and the state occupation probabilities
\begin{align}
	W_i(t)&:=\E\left[\1_{\{X(t)=i\}}\right].
	\label{eq:uncens_stateoccupprob}
\end{align}

\begin{theorem}[Projection and uniqueness]\label{th:projection}
The macro-level payment process  $B$ defined by \eqref{eq:B_decomp}  and by the right-continuous finite-variation functions
\begin{align}\label{Bi_construction}
	B_i(t) = \int_{[0,t]} \frac{1}{W_i(s-)} G_i(\dd s)
\end{align}
 satisfies the projection property 
 \begin{equation}\label{eq:projection_condition}
 	\E\bigg[ \int_I\1_{\{X(s-)=i\}}\,\dd A(s)\bigg]
 	=
 	\E\bigg[\int_I \1_{\{X(s-)=i\}}\,\dd B(s)\bigg]
 \end{equation}
 for all  intervals  $I \subset [0,T]$ and states $i \in \mathcal{X}$.
 
If $\widetilde B$ is another macro payment process of the form \eqref{eq:B_decomp} satisfying \eqref{eq:projection_condition}, then $\widetilde B=B$ on $[0,T]$  up to indistinguishability.
\end{theorem}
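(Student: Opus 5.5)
The idea is to reduce both sides of \eqref{eq:projection_condition} to deterministic integrals by Fubini, since all $A_{(i,e)}$ and $B_i$ are deterministic. We then read off existence and uniqueness from an identity between signed measures on $[0,T]$.

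\emph{Existence.} The left side of \eqref{eq:projection_condition} is $G_i(I)$, where $G_i$ is viewed as a (signed) Lebesgue--Stieltjes measure. Indeed, $G_i$ is of finite variation: by \eqref{eq:A_decomp}, $\1_{\{X(s-)=i\}}\dd A(s)=\sum_{e}\1_{\{Z(s-)=(i,e)\}}\dd A_{(i,e)}(s)$, so by Fubini
\[
G_i(t)=\sum_{e\in\Sset}\int_{[0,t]}\Prob\big(Z(s-)=(i,e)\big)\,A_{(i,e)}(\dd s).
\]
This has total variation bounded by $\sum_e |A_{(i,e)}|([0,T])<\infty$, which also justifies Fubini. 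In particular $|G_i|(\dd s)\le \sum_e \Prob(Z(s-)=(i,e))|A_{(i,e)}|(\dd s)$. Summing over $e$ gives $\Prob(X(s-)=i)=W_i(s-)$; here we use that $X$ is càdlàg, so $\Prob(X(s-)=i)=\lim_{u\uparrow s}W_i(u)$ by dominated convergence. Hence $|G_i|\ll W_i(\cdot-)\,\mu$ with $\mu=\sum_e|A_{(i,e)}|$. In particular, $G_i$ puts no mass on $\{s: W_i(s-)=0\}$, so \eqref{Bi_construction} is well defined (with the convention $0/0=0$). The remaining issue is finite variation of $B_i$. The density bound gives $|B_i|(\dd s)\le \mu(\dd s)$ on $\{W_i(s-)>0\}$, which is finite; this is the cleanest route. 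For the right side, on the event $\{X(s-)=i\}$ only the $i$-term of \eqref{eq:B_decomp} contributes, so $\E[\int_I\1_{\{X(s-)=i\}}\dd B(s)]=\int_I W_i(s-)\,B_i(\dd s)=\int_I \1_{\{W_i(s-)>0\}}G_i(\dd s)=G_i(I)$, using the null-set property just established.

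\emph{Uniqueness.} Let $\widetilde B$ with components $\widetilde B_i$ satisfy \eqref{eq:projection_condition}. The same Fubini computation gives $\int_I W_i(s-)\,\widetilde B_i(\dd s)=G_i(I)=\int_I W_i(s-)\,B_i(\dd s)$ for all intervals $I$. Intervals form a $\pi$-system generating the Borel sets, and both sides are finite signed measures, so $W_i(s-)(\widetilde B_i-B_i)(\dd s)=0$ as measures. Hence $\widetilde B_i-B_i$ is concentrated on $N_i=\{s:W_i(s-)=0\}$. Then
\[
\widetilde B(t)-B(t)=\sum_i\int_{[0,t]}\1_{\{X(s-)=i\}}\,(\widetilde B_i-B_i)(\dd s),
\]
and its absolute value is bounded by $\sum_i\int_{[0,T]}\1_{\{X(s-)=i\}}\1_{N_i}(s)\,|\widetilde B_i-B_i|(\dd s)$. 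This bound has expectation $\sum_i\int_{N_i}W_i(s-)\,|\widetilde B_i-B_i|(\dd s)=0$. So a single null set controls all $t$ simultaneously, which yields indistinguishability rather than mere modification.

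\emph{Main obstacle.} The delicate points are the null sets $\{W_i(s-)=0\}$ and the finite variation of $B_i$ near times where $W_i(s-)$ is small. Both are handled by the domination $|G_i|\le W_i(\cdot-)\mu$ derived from the micro-level representation, rather than by any lower bound on $W_i$. I would establish this domination carefully first, including the identification $\Prob(X(s-)=i)=W_i(s-)$.
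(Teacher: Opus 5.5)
Your proof is correct. The existence half matches the paper's. The paper also rewrites $B_i$ by Fubini as $\sum_e\int \Prob(Z(s-)=(i,e))/\Prob(X(s-)=i)\,\dd A_{(i,e)}(s)$ and uses that the numerator is bounded by the denominator; your domination $|G_i|(\dd s)\le W_i(s-)\,\mu(\dd s)$ is the same observation in measure form.

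Uniqueness is where you take a different route. The paper shows $\E[B(\tau)]=\E[\widetilde B(\tau)]$ for every random time $\tau\colon\Omega\to[0,T]$. It inserts the ratio $\E[\1_{\{\tau\ge t\}}\1_{\{X(t-)=i\}}]/\E[\1_{\{X(t-)=i\}}]$, with the convention $0/0:=0$ silently absorbing the set $N_i$, and exchanges $\dd B$ for $\dd\widetilde B$ under the expectation. It then simply asserts that the right-continuous processes coincide a.s.

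You instead identify the finite signed measures $W_i(s-)\,\widetilde B_i(\dd s)$ and $W_i(s-)\,B_i(\dd s)$ via the $\pi$-system of intervals. From this you conclude that $\widetilde B_i-B_i$ is carried by $N_i=\{s:W_i(s-)=0\}$, and you dominate $\sup_t|\widetilde B(t)-B(t)|$ by one nonnegative random variable with zero mean.

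What your route buys:
\begin{itemize}
\item It makes explicit the extension from intervals to Borel sets. The paper needs this too, since it integrates a non-indicator function of $t$ against $\E[\1_{\{X(t-)=i\}}\dd B(t)]$, but does not state it.
\item It delivers indistinguishability with a single null set. The paper's last step still requires a suitable choice of $\tau$ (e.g.\ $\tau=t$ on an event and $\tau=0$ off it, with $t=0$ handled separately), plus right-continuity to pass from each fixed $t$ to all $t$ simultaneously.
\item It shows why uniqueness holds only for the process and not for its components: $\widetilde B_i$ may differ from $B_i$ at times when state $i$ is a.s.\ not occupied.
\end{itemize}
The paper's random-time formulation is more compact and closer in spirit to a section-theorem argument, but it leaves these steps to the reader.
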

The projection property \eqref{eq:projection_condition} is similar to the projection property  of 
\citet[Theorem 2]{Andersen2024}.  We additionally show uniqueness and do not require Markovianity. However, unlike \citet{Andersen2024}, we do not consider extended filtrations.
The nonnegative Lebesgue–Stieltjes integral in \eqref{Bi_construction} uses the standard convention that $0/0 := 0$. This convention is used throughout the paper.

\begin{proof}
	Using Fubini's theorem, the definition \eqref{Bi_construction} can be equivalently written as 
\begin{align}\begin{split}
		\label{ExplConstrOfB}
		 B_{i}(t) &= \sum_{e \in \mathcal{S}} \int_{[0,t]} \frac{ \mathbb{E}[   \1_{\{Z(s-)=(i,e)\}} ]}{\sum_{e' \in \mathcal{S}} \mathbb{E}[\1_{\{Z(s-)=(i,e')\}}  ]}   \dd A_{(i,e)}(s).
\end{split}\end{align}
 Note that the non-negative numerator is always less than or equal to the non-negative denominator. This means that the functions $B_i$ inherit the finite-variation property of the functions $A_{(i,e)}$.  In particular,  the numerator is always zero when the denominator is zero, which ensures that the non-negative Lebesgue-Stieltjes integral in \eqref{Bi_construction} is well-defined even when the denominator is zero.
 
 Applying Fubini's theorem, we get
\begin{align}\begin{split}\label{EquilityofEdBEdA}
	\mathbb{E}[ \1_{\{X(t-)=i\}} \dd B (t) ]
	&= \mathbb{E}[ \1_{\{X(t-)=i\}}] \dd B_i (t) \\
	& =  \sum_{e \in \mathcal{S}} \mathbb{E}[   \1_{\{Z(t-)=(i,e)\}} ] \dd A_{(i,e)}(t)\\
	&=\mathbb{E}[ \1_{\{X(t-)=i\}} \dd A (t) ],
\end{split}\end{align}
which confirms that \eqref{ExplConstrOfB}  satisfies the projection property \eqref{eq:projection_condition}.

Suppose that $\tilde{B}$ is another macro-level payment process  that satisfies \eqref{eq:projection_condition}. 	 
Then, for 
any random time $\tau:\Omega \rightarrow [0,T]$, it holds that 
\begin{align*}
	\E[ B(\tau) ] &= \E\bigg[\sum_{i\in \mathcal{X}} \int_{[0,T]}    \1_{\{\tau \geq t\}} \1_{\{X(t-)=i\} } \dd B_i(t) \bigg]\\
	&= \sum_{i\in \mathcal{X}}  \int_{[0,T]} \frac{\E[   \1_{\{\tau \geq t\}} \1_{\{X(t-)=i\} }]}{\E[   \1_{\{X(t-)=i\} }]} \E[    \1_{\{X(t-)=i\} } \dd B(t) ] \\
	&= \sum_{i\in \mathcal{X}}  \int_{[0,T]} \frac{\E[   \1_{\{\tau \geq t\}} \1_{\{X(t-)=i\} }]}{\E[   \1_{\{X(t-)=i\} }]} \E[    \1_{\{X(t-)=i\} } \dd \tilde B(t) ] \\
		&= \E[ \tilde B(\tau) ],
\end{align*}
using the convention $0/0:=0$.
This means that the right-continuous processes $B$ and $\tilde B$ are almost surely equal.
\end{proof}
\begin{remark}
	Different from \eqref{eq:A_decomp}, 
in the insurance literature we also find sojourn payment processes on $[0,T]$ of the form 
\begin{equation*}\dd A(t)
	=
	\sum_{(i,e)\in\Z}\1_{\{Z(t)=(i,e)\}}\,\dd A_{(i,e)}(t).
\end{equation*}
The difference is here that the payment $\dd A_{(i,e)}(t)$ is triggered by the event $\{Z(t)=(i,e)\}$ instead of $\{Z(t-)=(i,e)\}$. All the results in this paper can be transferred to this alternative setting. Throughout the paper, we would  we need to replace  $t-$ by $t$. 
\end{remark}

Theorem \ref{th:projection} fixes the macro target induced by the micro model.
The next section adds delayed entry and right-censoring through an observation-window process.

\section{Limited observation window}\label{sec:unconditional_framework}




We introduce a latent observation window $(L,R)$, where $L$ is a delayed entry time and $R$ represents right-censoring.  We do not require direct observation of $R$ itself, it is enough that the observation windows is observed up to $R$ or termination of the insurance  contract. 

The following assumption isolates the observation window from the macro-level multistate process and the micro-level payment process. 

\begin{assumption}[Independent observation window]\label{ass:indep_censoring}
Let $(L,R)$  be stochastically indepedent of $(X,A)$, and let  $\Prob( L < t \leq R)>0$ for all $t \in [0,T]$.
\end{assumption}

%
%
%
%
In addition to \eqref{eq:uncens_payments} and \eqref{eq:uncens_stateoccupprob}, we define the  censored means
\begin{align}
G_i^{\mathrm c}(t)&:=\E\left[\int_{[0,t]}\1_{\{L < s \leq R\}}\1_{\{X(s-)=i\}}\dd A(s)\right]
\label{eq:cens_payments}
\end{align}
and
\begin{align}
	W^{\mathrm c}_i(t)&:=\E\left[\1_{\{L \leq  t < R\}}\1_{\{X(t)=i\}}\right].
	\label{eq:cens_stateoccupprob}
\end{align}
Under Assumption~\ref{ass:indep_censoring}, censored and uncensored payment means differ only by a predictable factor. This allows us to write the target \eqref{Bi_construction} in terms of the  censored means.

\begin{proposition}[Substitution formula]\label{prop:alt_rep}
Under Assumption~\ref{ass:indep_censoring}, it holds that
\begin{align}
\int_{[0,t]}\frac{1}{W^{\mathrm c}_i(s-)}G_i^{\mathrm c}(\dd s)&=\int_{[0,t]}\frac{1}{W_i(s-)}G_i(\dd s).
\label{eq:G_relation}
\end{align}
%
\end{proposition}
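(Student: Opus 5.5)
We factor both censored quantities through the coverage probability $p(s):=\Prob(L<s\leq R)$, which is positive on $[0,T]$ by Assumption~\ref{ass:indep_censoring}. The goal is to show that $G_i^{\mathrm c}(\dd s)=p(s)\,G_i(\dd s)$ as measures on $[0,T]$, and that $W^{\mathrm c}_i(s-)=p(s)\,W_i(s-)$. The factor $p(s)$ then cancels inside the integral in \eqref{eq:G_relation}.

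\textbf{Payment measure.} Write $\dd A(s)=\sum_{e}\1_{\{Z(s-)=(i',e)\}}\dd A_{(i',e)}(s)$ using \eqref{eq:A_decomp}. Since the $A_{(i,e)}$ are deterministic finite-variation functions, Fubini (applied to the positive and negative parts of the total variation) gives
\[
G_i^{\mathrm c}(\dd s)=\sum_{e\in\Sset}\E\big[\1_{\{L<s\leq R\}}\1_{\{Z(s-)=(i,e)\}}\big]\,\dd A_{(i,e)}(s).
\]
The integrand should be a function of $(X,A)$; here it involves $Z$. The intended reading of Assumption~\ref{ass:indep_censoring} is that $(L,R)$ is independent of the whole policy process $Z$ (equivalently of $(X,S)$, which determines $A$), and we state that reading explicitly. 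Under it, the expectation factorizes as $p(s)\,\Prob(Z(s-)=(i,e))$. The same computation without the window indicator gives $G_i(\dd s)=\sum_e \Prob(Z(s-)=(i,e))\,\dd A_{(i,e)}(s)$. Hence $G_i^{\mathrm c}(\dd s)=p(s)\,G_i(\dd s)$.

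\textbf{Occupation probability.} We have $W^{\mathrm c}_i(t)=\Prob(L\le t<R)\,W_i(t)$ by independence. Take left limits as $t\uparrow s$. Since $\1_{\{L\le t<R\}}\to\1_{\{L<s\le R\}}$ pointwise and $Z$ is càdlàg on a finite space, so that $\1_{\{X(t)=i\}}\to\1_{\{X(s-)=i\}}$, dominated convergence yields $W^{\mathrm c}_i(s-)=p(s)\,W_i(s-)$. Here $W_i(s-)=\Prob(X(s-)=i)$, and at $s=0$ we use $Z(0-)=Z(0)$ together with the natural convention for the window.

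\textbf{Conclusion.} Combining the two steps, for every $s$ with $W_i(s-)>0$,
\[
\frac{G_i^{\mathrm c}(\dd s)}{W^{\mathrm c}_i(s-)}=\frac{p(s)\,G_i(\dd s)}{p(s)\,W_i(s-)}=\frac{G_i(\dd s)}{W_i(s-)}.
\]
On the set where $W_i(s-)=0$, we also have $W^{\mathrm c}_i(s-)=0$ because $p(s)>0$, and both measures $G_i$ and $G_i^{\mathrm c}$ vanish there, as in the proof of Theorem~\ref{th:projection}. So the $0/0$ convention treats both sides identically, and integrating over $[0,t]$ gives \eqref{eq:G_relation}.

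The main obstacle is the factorization of the joint expectation $\E[\1_{\{L<s\leq R\}}\1_{\{Z(s-)=(i,e)\}}]$: it needs independence from the micro path, not merely from $(X,A)$. If only the literal independence from $(X,A)$ is granted, I would instead work with $\dd A$ directly. One would write $G_i^{\mathrm c}(I)=\E\big[\int_I \1_{\{L<s\le R\}}\1_{\{X(s-)=i\}}\dd A(s)\big]$ and condition on $(L,R)$. Then $\int h(s)\1_{\{X(s-)=i\}}\dd A(s)$ is a functional of $(X,A)$ for each fixed deterministic $h$, so its conditional expectation given $(L,R)$ equals $\int h\,\dd G_i$ with $h(s)=\1_{\{L<s\le R\}}$. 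Taking expectations over $(L,R)$ gives $\int_I p\,\dd G_i$, so this route avoids $Z$ altogether.
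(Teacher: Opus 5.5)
Your proof is correct, but its two routes relate differently to the paper. Your fallback at the end is essentially the paper's proof. The paper conditions on $(X,A)$, so that independence turns $\1_{\{L<s\le R\}}$ into $p(s)$ inside $\E\bigl[\int \1_{\{L<s\le R\}}\Phi(\dd s)\bigr]$ for any nonnegative random measure $\Phi$ with $\sigma(\Phi)\subseteq\sigma(X,A)$. It then applies the mean-measure identity $\E[\int f\,\dd\Phi]=\int f\,\dd\nu_\Phi$ for deterministic $f$. You do the same Fubini computation on the product law of $((L,R),(X,A))$, only in the other order: you integrate out $(X,A)$ first by conditioning on $(L,R)$.

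Your primary route through the micro decomposition \eqref{eq:A_decomp} is genuinely different. It buys an explicit density of $G_i^{\mathrm c}$ against the deterministic $A_{(i,e)}$, but only under a strictly stronger hypothesis. The paper's proof shows that independence from $(X,A)$ is all that is needed, so there is no reason to reinterpret Assumption~\ref{ass:indep_censoring} as independence from $Z$. Moreover, your per-state factorization $\E[\1_{\{L<s\le R\}}\1_{\{Z(s-)=(i,e)\}}]=p(s)\Prob(Z(s-)=(i,e))$ can fail under the literal assumption. For example, draw the micro label within $i$ independently of $X$, set $A_{(i,e)}=A_{(i,e')}$, and let $(L,R)$ depend on that label. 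The summed identity $G_i^{\mathrm c}(\dd s)=p(s)G_i(\dd s)$ still holds there, so you should promote the fallback to your main argument.

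Your write-up does improve on the paper in two details it leaves implicit. First, you derive the left-limit identity $W_i^{\mathrm c}(s-)=p(s)W_i(s-)$ explicitly via dominated convergence. Second, you check that $G_i$ and $G_i^{\mathrm c}$ charge no part of $\{s:W_i(s-)=0\}$, which uses the deterministic structure \eqref{eq:A_decomp} exactly as in the proof of Theorem~\ref{th:projection}. This ensures the $0/0$ convention treats both sides alike, whereas the paper simply cancels $p(s)>0$.
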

\begin{proof}
Fix $t\le T$, and let $\Phi$ be a nonnegative random measure  with $\sigma(\Phi)\subseteq \sigma(X,A)$ and $\E[\Phi([0,t])]<\infty$.  Then
\begin{align*}
\E\left[\int_{[0,t]}\1_{\{L <  s \leq  R\}}\Phi(\dd s)\right]
&=\E\left[\E\left[\int_{[0,t]}\1_{\{L <  s \leq  R\}}\Phi(\dd s)\bigg|X,A\right]\right] \\
&=
\E\left[\int_{[0,t]}\E\left[\1_{\{L <  s \leq  R\}}\given X,A\right]\Phi(\dd s)\right] \\
&=
\E\left[\int_{[0,t]}\E[\1_{\{L <  s \leq  R\}}]\Phi(\dd s)\right],
\end{align*}
where the second equality uses the conditional Tonelli theorem for nonnegative integrands, and the third uses the stochastic independence of $(L,R)$ and $(X,A)$. 
Now define the mean measure $\nu_\Phi(B):=\E[\Phi(B)]$. For nonnegative simple functions $f=\sum_{m=1}^M a_m\1_{B_m}$,
\begin{align*}
\E\left[\int_{[0,t]}f(s)\,\Phi(\dd s)\right]
&=
\sum_{m=1}^M a_m\,\E[\Phi(B_m)] \\
&=
\sum_{m=1}^M a_m\,\nu_\Phi(B_m) \\
&=
\int_{[0,t]}f(s)\,\nu_\Phi(\dd s).
\end{align*}
By monotone convergence, this extends to all nonnegative measurable functions $f$. Taking $f(s)=\E[\1_{\{L <  s \leq  R\}}] $ yields
\begin{align*}
\E\left[\int_{(0,t]}\E[\1_{\{L <  s \leq  R\}}]\Phi(\dd s)\right]
=
\int_{(0,t]}\E[\1_{\{L <  s \leq  R\}}]\,\nu_\Phi(\dd s)
=
\int_{(0,t]}\E[\1_{\{L <  s \leq  R\}}]\,\E[\Phi(\dd s)].
\end{align*}
We apply this with $\Phi(\dd s)=\1_{\{X(s-)=i\}}\dd A(s)$ to get
\begin{align*}
	G_i^{\mathrm c} (\dd s) =  \E[\1_{\{L <  s \leq  R\}}]\, G_i (\dd s).
\end{align*}
 The stochastic independence of $(L,R)$ and $(X,A)$ gives
\begin{align*}
	W_i^{\mathrm c} (s-) =  \E[\1_{\{L <  s \leq  R\}}]\, W_i(s-).
\end{align*}
The latter two equations yield \eqref{eq:G_relation} since $\E[\1_{\{L <  s \leq  R\}}]>0$ by 
Assumption \ref{ass:indep_censoring}.
\end{proof}

Proposition~\ref{prop:alt_rep} is the basis for estimation:  the payment functions $B_i$, $i \in \mathcal{X}$, can be recovered from observable censored payment sums and censored state occupation counts.
%
%
The next section turns this identity into an estimator and studies its large-sample behavior.

\section{Estimators}\label{sec:estimators}

We assume that we have a stochastically independent and identically distributed sample
\begin{align*}
(X^p,A^p,L^p,R^p)_{p=1}^n.
\end{align*}
This is a probabilistic representation, not an observation requirement. The estimators below only use observable stopped portfolio summaries and state  occupation counts. 

We define the empirical analogues of the population objects from Section~\ref{sec:unconditional_framework}:
\begin{align}
\mathbb W_i^{(n)}(t)
&:=\frac1n\sum_{p=1}^n\1_{\{L^p \leq  t <  R^p\}}\1_{\{X^p(t)=i\}},
\label{eq:W_emp}
\\
\mathbb G_i^{(n)}(t)
&:=\frac1n\sum_{p=1}^n\int_{[0,t]}\1_{\{L^p < s \leq  R^p\}}\1_{\{X^p(s-)=i\}}\dd A^p(s).
\label{eq:G_emp}
\end{align}
Based on these, we define the estimator for the target:
\begin{align}
\mathbb B_i^{(n)}(t)
&:=\int_{[0,t]}\frac{1}{\mathbb W_i^{(n)}(s-)}\,\mathbb G_i^{(n)}(\dd s).
\label{eq:BBhat}
\end{align}
 This is the direct empirical analogue of \eqref{eq:G_relation}.

To compute $\mathbb B_i^{(n)}$, one does not need the full collection of individual payment paths or the latent right-censoring times $R^p$ themselves. For each time $t$, one must observe only the sums
\begin{align*}
\sum_{p=1}^n\1_{\{L^p <  t \leq  R^p\}} \1_{\{X^p(t-)=i\}}
\end{align*}
and 
\begin{align*}
\sum_{p=1}^n\int_{(0,t]}\1_{\{L^p < s \leq  R^p\}}\1_{\{X^p(s-)=i\}}\dd A^p(s).
\end{align*}
These quantities are naturally available in insurance  data systems.  Therefore, the estimator \eqref{eq:BBhat} is usually computable in insurance practice.


We now introduce technical assumptions to ensure the estimators have  desirable asymptotic properties.
%

\begin{assumption}[Moment condition for payment process]\label{ass:markov_moments}
	For some $\delta>0$, let
	\begin{align*}
		\E\bigg[\bigg(\int_{[0,T]} |\dd A(u)| \bigg)^{2+\delta}\bigg]<\infty.
	\end{align*}
\end{assumption}
Note that the integral in the latter assumption represents the total variation of $A$ on $[0,T]$. 
If is $A$ monotone, then we can replace this integral simply by  $|A(T)|$.

For all $i,j \in \mathcal{X}$ with $i \neq j$, we define the counting processes
\begin{align*}
	N_{ij}(t):=\sharp\big\{ s \in (0,t]: X(s-)=i,\ X(s)=j\big\},
\end{align*}
giving the number of jumps of $X$ from $i$ to $j$ on the time interval $(0,t]$. 
\begin{assumption}[Moment condition for macro-level transition counts]\label{ass:boundforjumpcounter}
For all $i,j \in \mathcal{X}$ with $i\neq j$, let
$$\E[N_{ij}(T)^2]<\infty.$$
\end{assumption}
For the consistency result, first moments would suffice. We impose second moments already here because the same condition is needed for the weak-convergence arguments below.

\begin{assumption}[Positivity]\label{ass:markov_pos}
	For each $i \in \mathcal{X}$, let $[\alpha_i,\beta_i]\subset [0,T]$ be an interval such that
	\begin{align*}
		\inf_{t \in [\alpha_i,\beta_i] }W_i^{\mathrm c}(t)>0.
	\end{align*}
\end{assumption}
This assumption specifies intervals on which inverse weighting is well-defined.

\section{Strong consistency of estimators}\label{subsec:strong_con}

We now show strong consistency of the estimators $\mathbb B_i^{(n)}$, $i \in \mathcal{X}$, on the intervals $[\alpha_i,\beta_i]$, $i \in \mathcal{X}$, from Assumption \ref{ass:markov_pos}.

\begin{theorem}[Strong uniform consistency]\label{th:strong_cons}
	Under Assumptions~\ref{ass:indep_censoring}, \ref{ass:markov_moments}, \ref{ass:boundforjumpcounter},  and~\ref{ass:markov_pos},
	\begin{align}
		\sup_{t \in [\alpha_i,\beta_i]}\left|
		\big(\mathbb B_i^{(n)}(t)-\mathbb B_i^{(n)}(\alpha_i)\big)
		-\big(B_i(t)-B_i(\alpha_i)\big)\right|\as0.
		\label{eq:cons_BB}
	\end{align}
\end{theorem}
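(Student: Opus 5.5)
We reduce the statement to uniform strong laws for $\mathbb W_i^{(n)}$ and $\mathbb G_i^{(n)}$ and a continuity argument for the map $(W,G)\mapsto \int (1/W(s-))\,G(\dd s)$. By Proposition~\ref{prop:alt_rep}, the increment $B_i(t)-B_i(\alpha_i)$ equals $\int_{(\alpha_i,t]} W_i^{\mathrm c}(s-)^{-1}\,G_i^{\mathrm c}(\dd s)$. The estimator increment is the same integral with $(\mathbb W_i^{(n)},\mathbb G_i^{(n)})$ in place of $(W_i^{\mathrm c},G_i^{\mathrm c})$. So it suffices to show
\begin{align*}
\sup_{t\in[0,T]}\big|\mathbb W_i^{(n)}(t)-W_i^{\mathrm c}(t)\big|\as 0,
\qquad
\sup_{t\in[0,T]}\big|\mathbb G_i^{(n)}(t)-G_i^{\mathrm c}(t)\big|\as 0,
\end{align*}
and then to control the plug-in map on $[\alpha_i,\beta_i]$.

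\emph{Step 1 (uniform SLLN for $\mathbb W_i^{(n)}$).} We write
\[
\1_{\{L\le t<R\}}\1_{\{X(t)=i\}}=\1_{\{L\le t\}}\1_{\{t<R\}}\1_{\{X(t)=i\}}.
\]
The last factor has total variation at most $\sum_{j}(N_{ij}(T)+N_{ji}(T))+1$, which is integrable by Assumption~\ref{ass:boundforjumpcounter}. The two monotone indicators are classical Glivenko--Cantelli classes. The product is therefore a bounded function of uniformly integrable bounded variation. By a Glivenko--Cantelli theorem for processes of integrable variation, as in \citet{BladtFurrer2025}, we obtain uniform a.s.\ convergence. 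Concretely, we split the path into increasing and decreasing parts, apply the pointwise SLLN on a countable dense grid including atoms, and use the usual bracketing argument as for the empirical cdf, together with the SLLN for the variation terms.

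\emph{Step 2 (uniform SLLN for $\mathbb G_i^{(n)}$).} The summand $t\mapsto\int_{[0,t]}\1_{\{L<s\le R\}}\1_{\{X(s-)=i\}}\dd A(s)$ has total variation bounded by $\int_{[0,T]}|\dd A|$, which is integrable by Assumption~\ref{ass:markov_moments}. We decompose $A$ into its positive and negative variation parts $A^\pm$, so that each summand is a difference of nondecreasing processes with integrable endpoints. For nondecreasing processes, the Glivenko--Cantelli bracketing argument applies: choose a finite grid on which the mean increments are below $\varepsilon$, handling atoms of the mean by including left limits. This yields uniform a.s.\ convergence.

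\emph{Step 3 (plug-in).} By Step 1 and Assumption~\ref{ass:markov_pos}, a.s.\ for large $n$ we have $\inf_{[\alpha_i,\beta_i]}\mathbb W_i^{(n)}(s-)\ge c/2>0$. The left limits are also covered by uniform convergence, since $\inf_{t\in[\alpha_i,\beta_i]} W_i^{\mathrm c}(t-)\ge c$ as a limit of values bounded below by $c$. At $s=\alpha_i$ itself the left limit is excluded because the integral is over $(\alpha_i,t]$. We then decompose
\begin{align*}
\int_{(\alpha_i,t]}\frac{\mathbb G_i^{(n)}(\dd s)}{\mathbb W_i^{(n)}(s-)}-\int_{(\alpha_i,t]}\frac{G_i^{\mathrm c}(\dd s)}{W_i^{\mathrm c}(s-)}
=\int_{(\alpha_i,t]}\Big(\frac{1}{\mathbb W_i^{(n)}(s-)}-\frac{1}{W_i^{\mathrm c}(s-)}\Big)\mathbb G_i^{(n)}(\dd s)
+\int_{(\alpha_i,t]}\frac{(\mathbb G_i^{(n)}-G_i^{\mathrm c})(\dd s)}{W_i^{\mathrm c}(s-)}.
\end{align*}
The first term is bounded by $4c^{-2}\sup|\mathbb W_i^{(n)}-W_i^{\mathrm c}|$ times the total variation of $\mathbb G_i^{(n)}$. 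That variation is at most $\frac1n\sum_p\int|\dd A^p|$, which is a.s.\ bounded by the SLLN.

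\emph{Main obstacle.} The second term cannot be bounded by the sup-norm alone. We integrate by parts, since $1/W_i^{\mathrm c}(s-)$ is left-continuous with finite variation on $[\alpha_i,\beta_i]$. Finite variation holds because $W_i^{\mathrm c}$ has variation controlled by $\E[N_{ij}(T)]$ and by the distributions of $L$ and $R$, and it is bounded below. Integration by parts turns the second term into boundary terms plus $\int(\mathbb G_i^{(n)}-G_i^{\mathrm c})\,\dd(1/W_i^{\mathrm c})$, all bounded by a constant times $\sup|\mathbb G_i^{(n)}-G_i^{\mathrm c}|$. The care needed is in matching left and right limits at common jump times in the integration-by-parts formula. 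Combining Steps 1--3 gives \eqref{eq:cons_BB}.
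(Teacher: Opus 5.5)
Your argument is correct and has the same skeleton as the paper's proof: uniform strong laws for $\mathbb W_i^{(n)}$ and $\mathbb G_i^{(n)}$ (Proposition~\ref{prop:comp_cons}), continuity of the ratio-integral map on $[\alpha_i,\beta_i]$ under Assumption~\ref{ass:markov_pos}, and identification of the limit through Proposition~\ref{prop:alt_rep}. Where you differ is in how the two ingredients are obtained.

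\emph{Uniform law for $\mathbb W_i^{(n)}$.} The paper does not use a generic Jordan decomposition of each path. It uses the explicit identity \eqref{eq:key_decomp}, which writes the at-risk indicator as entries at $L$, minus exits at $R$, plus censored transition counts into $i$, minus those out of $i$.
\begin{itemize}
\item Each piece is a nondecreasing c\`adl\`ag process with an explicit envelope, so Theorem~\ref{th:abstract_gc_cadlag} applies directly.
\item The same decomposition is reused for the bracketing argument in Section~\ref{subsec:weak_markov}.
\item Via \eqref{eq:key_decomp_pop} it shows at once that $W_i^{\mathrm c}$ has finite variation, which is what your integration-by-parts step needs.
\end{itemize}
Your route is equally valid, just less explicit: the positive and negative variation processes are each monotone, c\`adl\`ag, and dominated by a constant plus $\sum_{j\neq i}(N_{ij}(T)+N_{ji}(T))$.

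\emph{Plug-in step.} The paper simply cites Corollary~\ref{lem:map_cont}, whereas you prove the continuity by hand. Here your version is the more careful one. That corollary assumes only uniform convergence within $\mathbb D_\eta$, i.e.\ $\inf x_n\ge\eta$ and $y_n\in BV$. Without a uniform bound on the variation of $y_n$, its conclusion fails. For example, on $[0,1]$ take $x\equiv1$, $y\equiv0$, $x_n(s)=1+n^{-1/2}\cos(ns)$ and $y_n(s)=n^{-1/2}\sin(ns)$; then $\Psi(x_n,y_n)(1)\to-1/2$. Your two-term split makes visible exactly the facts that rescue the argument in this application:
\begin{itemize}
\item the almost sure boundedness of $\frac1n\sum_p\int_{[0,T]}|\dd A^p|$, by the strong law;
\item the finite variation of $1/W_i^{\mathrm c}$ on $[\alpha_i,\beta_i]$, together with its lower bound.
\end{itemize}
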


Before we proof this theorem, we show uniform convergence of the emprical processes $ \mathbb G_i^{(n)}$ and $ \mathbb W_i^{(n)}$.

\begin{proposition}[Uniform consistency of components]\label{prop:comp_cons}
Under Assumptions~\ref{ass:indep_censoring}, ~\ref{ass:markov_moments}, and \ref{ass:boundforjumpcounter},
\begin{align}
\sup_{0\le t\le T}\big|\mathbb G_i^{(n)}(t)-G_i^{\mathrm c}(t)\big|&\as0,
\label{eq:G_cons}
\\
\sup_{0\le t\le T}\big|\mathbb W_i^{(n)}(t)-W^{\mathrm c}_i(t)\big|&\as0.
\label{eq:W_cons}
\end{align}
\end{proposition}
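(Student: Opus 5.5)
The plan is to treat both statements as uniform strong laws of large numbers (Glivenko--Cantelli statements) for i.i.d.\ averages of processes indexed by $t\in[0,T]$. Rather than invoke abstract bracketing entropy, I will use the classical Glivenko--Cantelli device: reduce each process to a finite combination of monotone processes, use the pointwise strong law at finitely many grid points (including left limits), and control the remaining oscillation by monotonicity. Throughout, the pointwise strong law at fixed $t$ (and for left limits $t-$) follows from the i.i.d.\ assumption and integrability. For $\mathbb G_i^{(n)}$, integrability is guaranteed by Assumption~\ref{ass:markov_moments}, since the summands are bounded by $\int_{[0,T]}|\dd A^p(u)|$. For $\mathbb W_i^{(n)}$ the summands are bounded by $1$.

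For \eqref{eq:G_cons}, I decompose $A=A^+-A^-$ into the positive and negative variation parts, so that $\int_{[0,T]}|\dd A(u)|=A^+(T)+A^-(T)$. The summand
\[
\int_{[0,t]}\1_{\{L<s\le R\}}\1_{\{X(s-)=i\}}\,\dd A(s)
\]
is then a difference of two processes that are nondecreasing in $t$, each with an integrable total mass. For a nondecreasing process $Y^p$ with mean $m(t)=\E[Y(t)]$, the mean $m$ is a bounded nondecreasing càdlàg function. Given $\varepsilon>0$, I choose a finite grid $0=t_0<\dots<t_K=T$ such that $m(t_k-)-m(t_{k-1})<\varepsilon$. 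The jumps of $m$ larger than $\varepsilon$ are finitely many and are placed at grid points. The strong law at all $t_k$ and $t_k-$ then gives, by sandwiching,
\[
\limsup_n\sup_t\Big|\tfrac1n\textstyle\sum_p Y^p(t)-m(t)\Big|\le\varepsilon \quad\text{a.s.}
\]
Taking $\varepsilon=1/m$ over a countable sequence completes the argument. Applying this to the two monotone parts yields \eqref{eq:G_cons}.

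For \eqref{eq:W_cons}, the indicator $\1_{\{L\le t<R\}}\1_{\{X(t)=i\}}$ is not monotone, and this is the main obstacle. This is where Assumption~\ref{ass:boundforjumpcounter} enters. I write
\[
\1_{\{X(t)=i\}}=\1_{\{X(0)=i\}}+\sum_{j\ne i}\big(N_{ji}(t)-N_{ij}(t)\big),
\]
and
\[
\1_{\{L\le t<R\}}=\1_{\{L\le t\}}-\1_{\{R\le t\}}.
\]
The product is then a finite signed combination of products of nondecreasing nonnegative processes, namely $\1_{\{L\le t\}}\1_{\{X(0)=i\}}$, $\1_{\{L\le t\}}N_{ji}(t)$, $\1_{\{R\le t\}}N_{ij}(t)$, and so on. A product of two nonnegative nondecreasing processes is nondecreasing, and each such product is dominated by $1+\sum_{j\ne k}N_{jk}(T)$, which is integrable under Assumption~\ref{ass:boundforjumpcounter}. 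Only first moments are needed here. The monotone Glivenko--Cantelli argument from the previous paragraph therefore applies to each term, and since the sum is finite, \eqref{eq:W_cons} follows.

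The delicate point is the handling of left limits and atoms in the grid construction. The monotone processes are càdlàg, their means are càdlàg, and the strong law at $t_k-$ follows by monotone convergence of the approximating $t\uparrow t_k$. I expect no further obstacles beyond this bookkeeping.
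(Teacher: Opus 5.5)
Your argument is correct up to one small slip, and its skeleton matches the paper's. You split $A$ into monotone parts and write the exposure indicator as a finite signed combination of nonnegative nondecreasing c\`adl\`ag processes with integrable envelopes. You then combine the pointwise strong law, at grid points and their left limits, with monotone sandwiching. That last step is exactly Theorem~\ref{th:abstract_gc_cadlag}, which you reprove by hand.

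The genuine difference is how you decompose $\1_{\{L\le t<R\}}\1_{\{X(t)=i\}}$. The paper uses the window-localized identity \eqref{eq:key_decomp}. It involves the state at entry $L$, the state at exit $R\wedge T$, and the censored counting processes $\int_{(0,t]}\1_{\{L<s\le R\}}\,\dd N_{ji}(s)$. Each piece depends only on what happens inside the observation window, at the cost of a case analysis and of evaluating $X$ at random times. You instead multiply two global identities, $\1_{\{X(t)=i\}}=\1_{\{X(0)=i\}}+\sum_{j\neq i}(N_{ji}(t)-N_{ij}(t))$ and the entry/exit identity for the window. You then use that products of nonnegative nondecreasing processes are nondecreasing. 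This is purely algebraic and shorter. Its pieces involve latent quantities such as $X(0)$ and transitions outside the window, which is irrelevant for a limit theorem. Each product has envelope $1$ or $N_{jk}(T)$, both in $L^2(\Prob)$ under Assumption~\ref{ass:boundforjumpcounter}. So the same decomposition would also serve the bracketing step (Lemma~\ref{lem:mono_bracketing}) in the weak-convergence proof.

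The slip is the identity $\1_{\{L\le t<R\}}=\1_{\{L\le t\}}-\1_{\{R\le t\}}$, which fails when $L>R$: the right side equals $-1$ for $R\le t<L$. Nothing in Assumption~\ref{ass:indep_censoring} rules out $L\ge R$, which is why \eqref{eq:key_decomp} carries the factor $\1_{\{L<R\}}$. Use $\1_{\{L\le t<R\}}=\1_{\{L<R\}}\big(\1_{\{L\le t\}}-\1_{\{R\le t\}}\big)$ instead. The extra factor does not depend on $t$, so every product stays nonnegative and nondecreasing, and the rest of your proof goes through unchanged.
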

\begin{proof}
Since $A$ has càdlàg paths of finite variation, it can be decomposed into the difference of two monotone càdlàg processes. Based on this fact and the triangular inequality, it is sufficient to prove convergence for non-decreasing payment processes $A$ only.

The following key decomposition follows from counting the number of jumps into and out of state $i \in \mathcal{X}$ under left truncation and right censoring:
\begin{align}
\1_{\{L\le t< R\}}\1_{\{X(t)=i\}}
&=
\1_{\{L\le t\}}\1_{\{L<R\}}\1_{\{X(L)=i\}}
-\1_{\{R\le t\}}\1_{\{L<R\}}\1_{\{X(R\wedge T)=i\}}
\nonumber\\
&\quad
+\sum_{j:j\neq i}\int_{(0,t]}\1_{\{L<s\le R\}}\,\dd N_{ji}(s)
-\sum_{j:j\neq i}\int_{(0,t]}\1_{\{L<s\le R\}}\,\dd N_{ij}(s).
\label{eq:key_decomp}
\end{align}
Indeed, if $t<L$, then both sides are zero. If $L\le t<R$, then
\begin{align*}
\1_{\{X(t)=i\}}
=
\1_{\{X(L)=i\}}
+\sum_{j:j\neq i}\int_{(L,t]}\dd N_{ji}(s)
-\sum_{j:j\neq i}\int_{(L,t]}\dd N_{ij}(s),
\end{align*}
since the state indicator can change only through jumps into and out of the state $i$. Multiplying by $\1_{\{L<R\}}$ gives \eqref{eq:key_decomp} on $[L,R)$. If $t\ge R$, then the same identity evaluated at time $R\wedge T$ shows that the right-hand side equals $\1_{\{L<R\}}\1_{\{X(R\wedge T)=i\}}-\1_{\{L<R\}}\1_{\{X(R\wedge T)=i\}}=0$, which matches the left-hand side.

Define the auxiliary empirical processes
\begin{align*}
\mathbb J_i^{(n)}(t)
&:=\frac1n\sum_{p=1}^n\1_{\{L^p\le t\}}\1_{\{L^p<R^p\}}\1_{\{X^p(L^p)=i\}},\\
\mathbb C_i^{(n)}(t)
&:=\frac1n\sum_{p=1}^n\1_{\{R^p\le t\}}\1_{\{L^p<R^p\}}\1_{\{X^p(R^p\wedge T)=i\}},\\
\mathbb N_{ji}^{\mathrm c,(n)}(t)
&:=\frac1n\sum_{p=1}^n\int_{(0,t]}\1_{\{L^p<s\le R^p\}}\,\dd N^p_{ji}(s),
\qquad j\neq i,
\end{align*}
and their population counterparts
\begin{align*}
J_i(t)
&:=\Prob(L\le t,\ L<R,\ X(L)=i),\\
C_i(t)
&:=\Prob(R\le t,\ L<R,\ X(R\wedge T)=i),\\
N_{ji}^{\mathrm c}(t)
&:=\E\left[\int_{(0,t]}\1_{\{L<s\le R\}}\,\dd N_{ji}(s)\right],
\qquad j\neq i.
\end{align*}
Averaging \eqref{eq:key_decomp} empirically and in expectation gives
\begin{align}
\mathbb W_i^{(n)}(t)
&=
\mathbb J_i^{(n)}(t)-\mathbb C_i^{(n)}(t)
+\sum_{j:j\neq i}\Big(\mathbb N_{ji}^{\mathrm c,(n)}(t)-\mathbb N_{ij}^{\mathrm c,(n)}(t)\Big),
\label{eq:key_decomp_emp}
\\
W_i^{\mathrm c}(t)
&=
J_i(t)-C_i(t)
+\sum_{j:j\neq i}\Big(N_{ji}^{\mathrm c}(t)-N_{ij}^{\mathrm c}(t)\Big).
\label{eq:key_decomp_pop}
\end{align}

For fixed $t$, the random variables
\begin{align*}
\mathbb G_i^{(n)}(t),\quad \mathbb J_i^{(n)}(t),\quad \mathbb C_i^{(n)}(t),\quad \mathbb N_{ji}^{\mathrm c,(n)}(t),\quad \mathbb N_{ij}^{\mathrm c,(n)}(t)
\end{align*}
are empirical means of iid integrable random variables. Therefore, by the strong law of large numbers,
\begin{align*}
\mathbb G_i^{(n)}(t)&\as G_i^{\mathrm c}(t),\\
\mathbb J_i^{(n)}(t)&\as J_i(t),\\
\mathbb C_i^{(n)}(t)&\as C_i(t),\\
\mathbb N_{ji}^{\mathrm c,(n)}(t)&\as N_{ji}^{\mathrm c}(t),\qquad j\neq i.
\end{align*}
The same argument applies to the left limits.

Each map
\begin{align*}
t\mapsto \mathbb G_i^{(n)}(t),\quad
t\mapsto \mathbb J_i^{(n)}(t),\quad
t\mapsto \mathbb C_i^{(n)}(t),\quad
t\mapsto \mathbb N_{ji}^{\mathrm c,(n)}(t)
\end{align*}
is nondecreasing and c\`adl\`ag. Each corresponding limit map is a finite-measure cumulative function on $[\alpha_i,\beta_i]$. Therefore Theorem~\ref{th:abstract_gc_cadlag} applies to each component sequence, yielding uniform almost sure convergence. Here, the required integrability is immediate for $\mathbb J_i^{(n)}$ and $\mathbb C_i^{(n)}$, while
\begin{align*}
\mathbb N_{ji}^{\mathrm c,(n)}(\theta)\le \frac1n\sum_{p=1}^n N_{ji}^p(T)
\end{align*}
and $N_{ji}(T)\in L^1(\Prob)$ by Assumption~\ref{ass:boundforjumpcounter}.

Using \eqref{eq:key_decomp_emp} and \eqref{eq:key_decomp_pop}, together with the triangle inequality and the uniform convergences above, we obtain
\begin{align*}
\sup_{0\le t\le T}\big|\mathbb W_i^{(n)}(t)-W_i^{\mathrm c}(t)\big|
&\le
\sup_{0\le t\le T}\big|\mathbb J_i^{(n)}(t)-J_i(t)\big|
+\sup_{0\le t\le T}\big|\mathbb C_i^{(n)}(t)-C_i(t)\big|
\\
&\quad
+\sum_{j:j\neq i}\sup_{0\le t\le T}\big|\mathbb N_{ji}^{\mathrm c,(n)}(t)-N_{ji}^{\mathrm c}(t)\big|
+\sum_{j:j\neq i}\sup_{0\le t\le T}\big|\mathbb N_{ij}^{\mathrm c,(n)}(t)-N_{ij}^{\mathrm c}(t)\big|\\
& 
\as0.
\end{align*}
This proves \eqref{eq:G_cons} and \eqref{eq:W_cons}.
\end{proof}

We now transfer the component convergences to the inverse-weighted payment function estimator. Continuity and Hadamard differentiability results for the ratio-integral map are collected in Appendix~\ref{app:ep_tools}; see Lemma~\ref{lem:hadamard} and Corollary~\ref{lem:map_cont}.

\begin{proof}[Proof of Theorem \ref{th:strong_cons}]
By Proposition~\ref{prop:comp_cons},
\begin{align*}
(\mathbb W_i^{(n)},\mathbb G_i^{(n)})\as(W^{\mathrm c}_i,G_i^{\mathrm c})
\quad\text{uniformly on }[\alpha_i,\beta_i].
\end{align*}
Apply Corollary~\ref{lem:map_cont} with
\begin{align*}
(x_n,y_n)=(\mathbb W_i^{(n)},\mathbb G_i^{(n)}),
\quad
(x,y)=(W_i^{\mathrm c},G_i^{\mathrm c}).
\end{align*}
Then, using Assumption~\ref{ass:markov_pos},
\begin{align*}
\sup_{t \in [\alpha_i,\beta_i]}\left|
\big(\mathbb B_i^{(n)}(t)-\mathbb B_i^{(n)}(\alpha_i)\big)
-\int_{(\alpha_i,t]}\frac{1}{W_i^{\mathrm c}(s-)}\,G_i^{\mathrm c}(\dd s)\right|\as0.
\end{align*}
By Proposition~\ref{prop:alt_rep}, the limit integral equals $B_i(t)-B_i(\alpha_i)$.
\end{proof}

\section{Asymptotic normality}\label{subsec:weak_markov}

We now show asymptotic normality of the estimators $\mathbb B_i^{(n)}$, $i \in \mathcal{X}$, on the intervals $[\alpha_i,\beta_i]$, $i \in \mathcal{X}$, from Assumption \ref{ass:markov_pos}. Before we do that, we  derive the Gaussian limits of the empirical components  
\begin{align*}
	\xi_{i,G}^{(n)}(t)&:=\sqrt{n}\big(\mathbb G_i^{(n)}(t)-G_i^{\mathrm c}(t)\big),
	\\
	\xi_{i,W}^{(n)}(t)&:=\sqrt{n}\big(\mathbb W_i^{(n)}(t)-W_i^{\mathrm c}(t)\big).
\end{align*}

\begin{proposition}[Weak convergence of components]\label{th:weak_comp_markov}
	Under Assumptions~\ref{ass:indep_censoring}, ~\ref{ass:markov_moments}, and~\ref{ass:boundforjumpcounter}, the stacked process
	\begin{align*}
		\big(\xi_{i,G}^{(n)},\xi_{i,W}^{(n)}\big)
	\end{align*}
	converges weakly in $\ell^\infty([0,\theta])^2$ to a centered tight Gaussian process
	\begin{align*}
		\big(\xi_{i,G},\xi_{i,W}\big),
	\end{align*}
	with covariance functions, for $s,t\in[0,\theta]$,
	\begin{align*}
		\operatorname{Cov}(\xi_{i,G}(s),\xi_{i,G}(t))
		&=
		\E\left[
		\left(\int_{[0,s]}\1_{\{L<u\le R\}}\1_{\{X(u-)=i\}}\,\dd A(u)\right)
		\left(\int_{[0,t]}\1_{\{L<v\le R\}}\1_{\{X(v-)=i\}}\,\dd A(v)\right)
		\right]\\
		&\quad-G_i^{\mathrm c}(s)G_i^{\mathrm c}(t),
		\\
		\operatorname{Cov}(\xi_{i,G}(s),\xi_{i,W}(t))
		&=
		\E\left[
		\left(\int_{[0,s]}\1_{\{L<u\le R\}}\1_{\{X(u-)=i\}}\,\dd A(u)\right)
		\1_{\{L\le t<R\}}\1_{\{X(t)=i\}}
		\right]\\
		&\quad-G_i^{\mathrm c}(s)W_i^{\mathrm c}(t),
		\\
		\operatorname{Cov}(\xi_{i,W}(s),\xi_{i,W}(t))
		&=
		\Prob(L\le s<R,\ X(s)=i,\ L\le t<R,\ X(t)=i)
		\\
		&\quad -W_i^{\mathrm c}(s)W_i^{\mathrm c}(t).
	\end{align*}
\end{proposition}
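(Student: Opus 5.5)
The plan is to treat $(\xi_{i,G}^{(n)},\xi_{i,W}^{(n)})$ as an empirical process indexed by a function class and show that this class is $P$-Donsker. Here $\theta$ plays the role of $T$. As in the proof of Proposition~\ref{prop:comp_cons}, I first write $A=A^+-A^-$ with nondecreasing c\`adl\`ag parts. Assumption~\ref{ass:markov_moments} gives both parts a finite $(2+\delta)$-moment of $A^\pm(T)$. Since sums of Donsker classes are Donsker, and joint weak convergence of finitely many processes follows from Donsker-ness of the union class, it suffices to handle monotone $A$.

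For the $W$-component I use the key decomposition \eqref{eq:key_decomp}. It writes $\1_{\{L\le t<R\}}\1_{\{X(t)=i\}}$ as a finite signed sum of four kinds of terms. Each term is nondecreasing in $t$ for every fixed sample point:
\begin{align*}
\1_{\{L\le t\}}\1_{\{L<R\}}\1_{\{X(L)=i\}},\quad \1_{\{R\le t\}}\1_{\{L<R\}}\1_{\{X(R\wedge T)=i\}},\quad \int_{(0,t]}\1_{\{L<s\le R\}}\dd N_{kl}(s).
\end{align*}
Similarly, $t\mapsto\int_{[0,t]}\1_{\{L<s\le R\}}\1_{\{X(s-)=i\}}\dd A(s)$ is nondecreasing. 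Each of the resulting classes has the form $\{\omega\mapsto M(\omega,t):t\in[0,T]\}$, with $M$ nondecreasing and c\`adl\`ag in $t$ and with envelope $M(\cdot,T)$.

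The envelopes are square integrable. For the two indicator terms this is trivial, for the counting terms it follows from Assumption~\ref{ass:boundforjumpcounter}, and for the payment term it follows from Assumption~\ref{ass:markov_moments}. For such a class I would bound the $L^2(P)$-bracketing numbers directly. Partition $[0,T]$ into points $t_0<\dots<t_m$ such that the increments of $t\mapsto \E[M(T)\,M(t)]$, together with its left limits, are at most $\varepsilon^2$. The brackets $[M(t_{k-1}),M(t_k-)]$, plus degenerate brackets at the jump points, then have $L^2$-size at most $\varepsilon$. This is because $(M(b)-M(a))^2\le M(T)(M(b)-M(a))$. The number of brackets is $O(\varepsilon^{-2})$, so the bracketing integral is finite. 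Theorem 2.5.6 of van der Vaart--Wellner (or the corresponding tool in Appendix~\ref{app:ep_tools}) then gives the Donsker property. Alternatively, one could note that such classes are VC-subgraph of index $2$ and have a square-integrable envelope.

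Once the classes are Donsker, the stacked process converges in $\ell^\infty([0,\theta])^2$ to a tight centered Gaussian process by the multivariate CLT and asymptotic equicontinuity. The covariance of the limit is the covariance of the underlying summands. For the $G$-$G$ and $G$-$W$ entries this is the displayed product moment minus the product of the means. For the $W$-$W$ entry it is the joint probability minus the product of the means, since the product of two indicators is the indicator of the intersection.

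I expect the main obstacle to be the bracketing argument in the presence of atoms, that is, jumps of $G_i^{\mathrm c}$ or of the counting means at deterministic times. Handling these requires including left-limit brackets and possibly splitting at finitely many large atoms. The $2+\delta$ moment is not needed beyond square integrability in this approach. The linear-combination step is safe because weak convergence of each component class, together with finite-dimensional joint convergence, yields joint tightness.
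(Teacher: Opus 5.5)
Your proposal is correct and follows essentially the same route as the paper. You reduce to monotone $A$, split the $W$-response via \eqref{eq:key_decomp} into nondecreasing c\`adl\`ag cumulative classes with $L^2$ envelopes, and bound their bracketing numbers by $O(\varepsilon^{-2})$ through $t\mapsto\E[F V_t]$ with left-limit upper brackets, which is Lemma~\ref{lem:mono_bracketing}. You then apply the bracketing CLT and get joint convergence and the covariances from finite-dimensional CLTs plus marginal tightness. Combining through sums and unions of Donsker classes instead of a continuous linear map of the full stack is an equivalent packaging, and your extra degenerate bracket at the right endpoint supplies a small detail the paper's lemma leaves implicit.
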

\begin{theorem}[Weak convergence of the payment estimator]\label{th:weak_na_markov}
	Under Assumptions~\ref{ass:indep_censoring}, \ref{ass:markov_moments}, \ref{ass:boundforjumpcounter}, and~\ref{ass:markov_pos},
	\begin{align*}
		\sqrt{n}\Big((\mathbb B_i^{(n)}-\mathbb B_i^{(n)}(\alpha_i))-(B_i-B_i(\alpha_i))\Big)\dist\zeta_i
		\quad\text{in }\ell^\infty([\alpha_i,\beta_i]),
	\end{align*}
	where
	\begin{equation}\label{eq:weak_B_repr}
		\zeta_i(t)
		=
		\int_{(\alpha_i,t]}\frac{1}{W_i^{\mathrm c}(s-)}\xi_{i,G}(\dd s)
		-
		\int_{(\alpha_i,t]}\frac{\xi_{i,W}(s-)}{\big(W_i^{\mathrm c}(s-)\big)^2}\,G_i^{\mathrm c}(\dd s).
	\end{equation}
\end{theorem}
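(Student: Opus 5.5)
Theorem~\ref{th:weak_na_markov} follows from Proposition~\ref{th:weak_comp_markov} and the functional delta method, applied to the ratio-integral map
\begin{align*}
\phi(x,y)(t):=\int_{(\alpha_i,t]}\frac{1}{x(s-)}\,y(\dd s),\qquad t\in[\alpha_i,\beta_i].
\end{align*}
Its Hadamard differentiability is provided in Appendix~\ref{app:ep_tools} (Lemma~\ref{lem:hadamard}). First, observe that $\mathbb B_i^{(n)}(t)-\mathbb B_i^{(n)}(\alpha_i)=\phi(\mathbb W_i^{(n)},\mathbb G_i^{(n)})(t)$. By Proposition~\ref{prop:alt_rep}, $B_i(t)-B_i(\alpha_i)=\phi(W_i^{\mathrm c},G_i^{\mathrm c})(t)$. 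The quantity to study is therefore
\begin{align*}
\sqrt n\big(\phi(\mathbb W_i^{(n)},\mathbb G_i^{(n)})-\phi(W_i^{\mathrm c},G_i^{\mathrm c})\big).
\end{align*}
Proposition~\ref{th:weak_comp_markov}, with $\theta$ taken at least $\beta_i$ and the processes restricted to $[\alpha_i,\beta_i]$, gives $\sqrt n\big((\mathbb W_i^{(n)},\mathbb G_i^{(n)})-(W_i^{\mathrm c},G_i^{\mathrm c})\big)\dist(\xi_{i,W},\xi_{i,G})$ in $\ell^\infty([\alpha_i,\beta_i])^2$. The limit is tight Gaussian.

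Second, I check the domain requirements of Lemma~\ref{lem:hadamard}. The point $(W_i^{\mathrm c},G_i^{\mathrm c})$ must lie in the domain: $W_i^{\mathrm c}$ is bounded away from zero on $[\alpha_i,\beta_i]$ by Assumption~\ref{ass:markov_pos}, and $G_i^{\mathrm c}$ is of bounded variation by Assumption~\ref{ass:markov_moments}. The empirical pair must also lie in the domain with probability tending to one. Bounded variation of $\mathbb G_i^{(n)}$ holds trivially. Positivity of $\mathbb W_i^{(n)}$ and of its left limits, bounded away from zero, follows from the uniform consistency in Proposition~\ref{prop:comp_cons}, since left limits are controlled by the same sup bound. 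The delta method (van der Vaart--Wellner, Theorem~3.9.4) then yields convergence to $\phi'_{(W_i^{\mathrm c},G_i^{\mathrm c})}(\xi_{i,W},\xi_{i,G})$. Formally, this derivative is
\begin{align*}
\int_{(\alpha_i,t]}\frac{1}{W_i^{\mathrm c}(s-)}\,h_y(\dd s)-\int_{(\alpha_i,t]}\frac{h_x(s-)}{W_i^{\mathrm c}(s-)^2}\,G_i^{\mathrm c}(\dd s),
\end{align*}
which gives \eqref{eq:weak_B_repr} on substituting $h_y=\xi_{i,G}$ and $h_x=\xi_{i,W}$.

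The main obstacle is that the map $y\mapsto\int (1/x_-)\,\dd y$ is not continuous, let alone differentiable, in the sup norm on the full space of bounded functions. Hadamard differentiability therefore holds only tangentially, with increments $h_y$ in directions where the integral against $h_y$ is defined. The standard remedy is to define $\int f\,\dd h$ by integration by parts, $f(t)h(t)-f(\alpha_i)h(\alpha_i)-\int h(s)\,f(\dd s)$. This requires $1/W_i^{\mathrm c}(\cdot-)$ to be of bounded variation. That property follows from the decomposition \eqref{eq:key_decomp_pop}: $W_i^{\mathrm c}$ is a difference of monotone bounded functions, with the $N^{\mathrm c}$ terms finite by Assumption~\ref{ass:boundforjumpcounter}, and it is bounded below on $[\alpha_i,\beta_i]$. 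The same decomposition gives a uniform bound on the variation of $\mathbb W_i^{(n)}$ almost surely, which the Lemma~\ref{lem:hadamard} version of the chain rule likely needs for the perturbed integrands $1/\mathbb W_i^{(n)}$. I would verify that these conditions match the hypotheses of Lemma~\ref{lem:hadamard}, including that the limit process $\xi_{i,W}$ is supported in the appropriate tangent set, for instance on c\`adl\`ag paths. I would then conclude, interpreting the first integral in \eqref{eq:weak_B_repr} via this integration-by-parts definition.
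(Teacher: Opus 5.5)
Your proposal is correct and follows the paper's proof essentially step for step. It uses the same ratio-integral map with Lemma~\ref{lem:hadamard}, the same domain check via Assumption~\ref{ass:markov_pos} and Proposition~\ref{prop:comp_cons}, the functional delta method applied to Proposition~\ref{th:weak_comp_markov}, and the same identification of the centring via Proposition~\ref{prop:alt_rep}. One correction to your side remark: in the chain-rule argument behind Lemma~\ref{lem:hadamard}, the uniform variation bound is needed on the integrator $\mathbb G_i^{(n)}$, not on $\mathbb W_i^{(n)}$. That variation is bounded by $\frac1n\sum_{p}\int_{[0,T]}|\dd A^p(u)|$, hence bounded with probability tending to one by the strong law under Assumption~\ref{ass:markov_moments}. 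On the integrand side, bounded variation of $1/W_i^{\mathrm c}$, which you correctly derive from \eqref{eq:key_decomp_pop}, is all that is required.
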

\begin{proof}[Proof of Proposition \ref{th:weak_comp_markov}]
Let $O=(X,A,L,R)$ denote one observation. Define the monotone cumulative responses
\begin{align*}
\psi_{i,G,t}(O)&:=\int_{[0,t]}\1_{\{L<s\le R\}}\1_{\{X(s-)=i\}}\,\dd A(s),\\
\psi_{i,J,t}(O)&:=\1_{\{L\le t\}}\1_{\{L<R\}}\1_{\{X(L)=i\}},\\
\psi_{i,C,t}(O)&:=\1_{\{R\le t\}}\1_{\{L<R\}}\1_{\{X(R\wedge T)=i\}},\\
\psi_{ji,N,t}(O)&:=\int_{(0,t]}\1_{\{L<s\le R\}}\,\dd N_{ji}(s),\qquad j\neq i.
\end{align*}
For fixed $O$, each map $t\mapsto \psi_{\bullet,t}(O)$ above is c\`adl\`ag and nondecreasing. Their envelopes are
\begin{align*}
F_{i,G}=A(T),\qquad F_{i,J}=1,\qquad F_{i,C}=1,\qquad F_{ji,N}=N_{ji}(T),
\end{align*}
which all belong to $L^2(\Prob)$ by Assumptions~\ref{ass:markov_moments} and~\ref{ass:boundforjumpcounter}.

For each class
\begin{align*}
\mathcal F_{i,G}&:=\{\psi_{i,G,t}:t\in[0,\theta]\},\\
\mathcal F_{i,J}&:=\{\psi_{i,J,t}:t\in[0,\theta]\},\\
\mathcal F_{i,C}&:=\{\psi_{i,C,t}:t\in[0,\theta]\},\\
\mathcal F_{ji,N}&:=\{\psi_{ji,N,t}:t\in[0,\theta]\},\qquad j\neq i,
\end{align*}
pointwise measurability follows from right-continuity. By Lemma~\ref{lem:mono_bracketing}, each class has bracketing entropy
\begin{align*}
N_{[\,]}\left(\varepsilon\|F_{\bullet}\|_{L^2(\Prob)},\mathcal F_{\bullet},L^2(\Prob)\right)\le \frac{C_{\bullet}}{\varepsilon^2},
\end{align*}
hence finite entropy integral. Therefore Theorem~\ref{Donsker} yields weak convergence in $\ell^\infty([0,\theta])$ for
\begin{align*}
\xi_{i,G}^{(n)},\quad
\xi_{i,J}^{(n)}(t)&:=\sqrt n\big(\mathbb J_i^{(n)}(t)-J_i(t)\big),\\
\xi_{i,C}^{(n)}(t)&:=\sqrt n\big(\mathbb C_i^{(n)}(t)-C_i(t)\big),\\
\xi_{ji,N}^{(n)}(t)&:=\sqrt n\big(\mathbb N_{ji}^{\mathrm c,(n)}(t)-N_{ji}^{\mathrm c}(t)\big),\qquad j\neq i.
\end{align*}

To obtain joint convergence, fix finite collections of times and components from the family
\begin{align*}
\xi_{i,G}^{(n)},\quad \xi_{i,J}^{(n)},\quad \xi_{i,C}^{(n)},\quad \xi_{ji,N}^{(n)},\qquad j\neq i.
\end{align*}
The resulting finite-dimensional vectors are empirical means of iid centered random vectors with finite second moments, so the multivariate CLT gives convergence of all mixed finite-dimensional distributions. Lemma~\ref{stack_convergence} then yields joint weak convergence of the full stack.

Now use the exact decompositions \eqref{eq:key_decomp_emp} and \eqref{eq:key_decomp_pop} to obtain
\begin{align*}
\xi_{i,W}^{(n)}(t)
=
\xi_{i,J}^{(n)}(t)-\xi_{i,C}^{(n)}(t)
+\sum_{j:j\neq i}\Big(\xi_{ji,N}^{(n)}(t)-\xi_{ij,N}^{(n)}(t)\Big).
\end{align*}
Hence $\xi_{i,W}^{(n)}$ is a continuous linear transform of the already convergent stack, so it converges weakly in $\ell^\infty([0,\theta])$. Applying the corresponding continuous linear map to the full stack and then projecting onto
\begin{align*}
\big(\xi_{i,G}^{(n)},\xi_{i,W}^{(n)}\big)
\end{align*}
gives the claimed joint limit.

For the covariance functions, note that
\begin{align*}
\xi_{i,W}^{(n)}(t)
=
\frac{1}{\sqrt n}\sum_{p=1}^n\Big(\1_{\{L^p\le t<R^p\}}\1_{\{X^p(t)=i\}}-W_i^{\mathrm c}(t)\Big),
\end{align*}
so the one-observation response associated with $\xi_{i,W}^{(n)}(t)$ is
\begin{align*}
\psi_{i,W,t}(O):=\1_{\{L\le t<R\}}\1_{\{X(t)=i\}}.
\end{align*}
Gaussian empirical process limits satisfy
\begin{align*}
\operatorname{Cov}(Gf,Gg)=\operatorname{Cov}(f(O),g(O)).
\end{align*}
Applying this identity with $(f,g)=(\psi_{i,G,s},\psi_{i,G,t})$ gives the first covariance function. Applying it with $(f,g)=(\psi_{i,G,s},\psi_{i,W,t})$ gives the second. Finally, applying it with $(f,g)=(\psi_{i,W,s},\psi_{i,W,t})$ gives
\begin{align*}
\operatorname{Cov}(\xi_{i,W}(s),\xi_{i,W}(t))
&=
\operatorname{Cov}\big(\1_{\{L\le s<R\}}\1_{\{X(s)=i\}},\1_{\{L\le t<R\}}\1_{\{X(t)=i\}}\big)\\
&=
\Prob(L\le s<R,\ X(s)=i,\ L\le t<R,\ X(t)=i)
-W_i^{\mathrm c}(s)W_i^{\mathrm c}(t),
\end{align*}
which is the third displayed covariance function.
\end{proof}

\begin{proof}[Proof of Theorem \ref{th:weak_na_markov}]
By Proposition~\ref{th:weak_comp_markov},
\begin{align*}
\sqrt{n}\big((\mathbb W_i^{(n)},\mathbb G_i^{(n)})-(W_i^{\mathrm c},G_i^{\mathrm c})\big)
\dist
(\xi_{i,W},\xi_{i,G})
\quad\text{in }\ell^\infty([0,\theta])^2.
\end{align*}
Set
\begin{align*}
\eta_i:=\inf_{t_i\le t\le\theta}W_i^{\mathrm c}(t)>0,
\end{align*}
which is strictly positive by Assumption~\ref{ass:markov_pos}. By Proposition~\ref{prop:comp_cons},
\begin{align*}
\Prob\big(\inf_{t_i\le t\le\theta}\mathbb W_i^{(n)}(t)\ge\eta_i/2\big)\to1.
\end{align*}
Hence, with probability tending to one, the pair $(\mathbb W_i^{(n)},\mathbb G_i^{(n)})$ belongs to the domain
\begin{align*}
\mathbb D_{\eta_i/2}
:=
\Big\{(x,y)\in D([t_i,\theta])^2:\inf_{t_i\le t\le\theta}x(t)\ge \eta_i/2,\ y\in BV([t_i,\theta])\Big\}.
\end{align*}

Define
\begin{align*}
\Psi_i(x,y)(t):=\int_{(t_i,t]}\frac{1}{x(s-)}\,y(\dd s),\quad t\in[t_i,\theta].
\end{align*}
By Lemma~\ref{lem:hadamard}, $\Psi_i$ is Hadamard differentiable at $(W_i^{\mathrm c},G_i^{\mathrm c})$, tangentially to $D([t_i,\theta])^2$, with derivative
\begin{align*}
\Psi'_{i,(W_i^{\mathrm c},G_i^{\mathrm c})}(h_1,h_2)(t)
=
\int_{(t_i,t]}\frac{1}{W_i^{\mathrm c}(s-)}\,h_2(\dd s)
-
\int_{(t_i,t]}\frac{h_1(s-)}{\big(W_i^{\mathrm c}(s-)\big)^2}\,G_i^{\mathrm c}(\dd s).
\end{align*}
Therefore Theorem~\ref{delta_method} yields
\begin{align*}
\sqrt n\left(\Psi_i(\mathbb W_i^{(n)},\mathbb G_i^{(n)})-\Psi_i(W_i^{\mathrm c},G_i^{\mathrm c})\right)
\dist
\Psi'_{i,(W_i^{\mathrm c},G_i^{\mathrm c})}(\xi_{i,W},\xi_{i,G}),
\end{align*}
in $\ell^\infty([t_i,\theta])$. By the displayed derivative formula, the limit on the right-hand side is exactly the process \eqref{eq:weak_B_repr}.

It remains to identify the two arguments of $\Psi_i$. By definition of the estimator,
\begin{align*}
\Psi_i(\mathbb W_i^{(n)},\mathbb G_i^{(n)})(t)
=
\int_{(t_i,t]}\frac{1}{\mathbb W_i^{(n)}(s-)}\,\mathbb G_i^{(n)}(\dd s)
=
\mathbb B_i^{(n)}(t)-\mathbb B_i^{(n)}(t_i).
\end{align*}
Moreover, Proposition~\ref{prop:alt_rep} gives
\begin{align*}
\Psi_i(W_i^{\mathrm c},G_i^{\mathrm c})(t)
=
\int_{(t_i,t]}\frac{1}{W_i^{\mathrm c}(s-)}\,G_i^{\mathrm c}(\dd s)
=
\int_{(t_i,t]}\frac{1}{W_i(s-)}\,G_i(\dd s)
=
B_i(t)-B_i(t_i).
\end{align*}
Combining the last three displays proves the claim.
\end{proof}

\begin{remark}[Confidence intervals and bands]\label{remark:ci}
Let $\widehat K_i$ be a consistent estimator of the covariance function
\begin{align*}
K_i(s,t)=\operatorname{Cov}(\zeta_i(s),\zeta_i(t)).
\end{align*}
With individual data, it can be constructed as follows. For policy $p$, set
\begin{align*}
G_{i,p}^{\mathrm c}(t)
&:=
\int_{[0,t]}\1_{\{L^p<s\le R^p\}}\1_{\{X^p(s-)=i\}}\,\dd A^p(s),\\
W_{i,p}^{\mathrm c}(t)
&:=
\1_{\{L^p\le t<R^p\}}\1_{\{X^p(t)=i\}}.
\end{align*}
Then define
\begin{align*}
\widehat Z_{i,p}(t)
&:=
\int_{(\alpha_i,t]}\frac{1}{\mathbb W_i^{(n)}(s-)}
\,\dd\big(G_{i,p}^{\mathrm c}-\mathbb G_i^{(n)}\big)(s)\\
&\quad
-
\int_{(\alpha_i,t]}
\frac{W_{i,p}^{\mathrm c}(s-)-\mathbb W_i^{(n)}(s-)}
{\big(\mathbb W_i^{(n)}(s-)\big)^2}
\,\mathbb G_i^{(n)}(\dd s),
\qquad t\in[\alpha_i,\beta_i],
\end{align*}
and take
\begin{align*}
\widehat K_i(s,t)
:=
\frac1n\sum_{p=1}^n \widehat Z_{i,p}(s)\widehat Z_{i,p}(t).
\end{align*}
Pointwise confidence intervals for $B_i(t)-B_i(\alpha_i)$ are then given by
\begin{align*}
\mathbb B_i^{(n)}(t)-\mathbb B_i^{(n)}(\alpha_i)
\ \pm\
z_{1-\gamma/2}\sqrt{\frac{\widehat K_i(t,t)}{n}}.
\end{align*}
For simultaneous confidence bands on $[\alpha_i,\beta_i]$, one may estimate the quantile $c_{1-\gamma}$ of
\begin{align*}
\sup_{t\in[\alpha_i,\beta_i]}
\left|\frac{\zeta_i(t)}{\sqrt{K_i(t,t)}}\right|
\end{align*}
by multiplier simulation: draw independent variables $e_1,\ldots,e_n$ with mean zero and variance one, form
\begin{align*}
\zeta_i^*(t)
:=
\frac1{\sqrt n}\sum_{p=1}^n e_p\widehat Z_{i,p}(t),
\end{align*}
and let $\widehat c_{1-\gamma}$ be the empirical $(1-\gamma)$-quantile of
\begin{align*}
\sup_{t\in[\alpha_i,\beta_i]}
\left|\frac{\zeta_i^*(t)}{\sqrt{\widehat K_i(t,t)}}\right|.
\end{align*}
The simultaneous band is
\begin{align*}
\mathbb B_i^{(n)}(t)-\mathbb B_i^{(n)}(\alpha_i)
\ \pm\
\widehat c_{1-\gamma}\sqrt{\frac{\widehat K_i(t,t)}{n}},
\qquad t\in[\alpha_i,\beta_i].
\end{align*}
This covariance estimation requires individual data, because it uses products of policy-specific terms. This is in contrast to the estimator $\mathbb B_i^{(n)}$ itself, which can be computed from the aggregate payment and exposure processes appearing in \eqref{eq:BBhat}.
\hfill\qed
\end{remark}

\section{Numerical Example: disability insurance}\label{sec:numerical_examples}

For illustrative purposes, we artificially generate data from the micro-level model described in Ahmad and Bladt (2023), which we then use to demonstrate the estimation of macro-level payment functions. This artificial setting allows to evaluate the performance of our estimator.
The macro state space $\mathcal X=\{1,2,3\}$ distinguishes between the states  active, disabled, and dead. The state disabled has ten micro-level sub-states, so the enlarged state space is
\begin{align*}
\mathcal Z=\{(1,1),(2,1),\ldots,(2,10),(3,1)\}.
\end{align*}
  In state active and dead, micro levels are meaningless and are therefore skipped. The state $(3,1)$ is absorbing. The model setting is illustrated in Figure~\ref{fig:simulation_micro_macro}.  The process is simulated in age time on $[50,70]$ and starts in $(1,1)$ at age $50$. 

\begin{figure}[!htbp]
	\centering
	\scalebox{0.82}{
		\begin{tikzpicture}[
			node distance=2em and 0em,
			statebox/.style={
				rectangle,
				rounded corners,
				draw=black, thick,
				text width=6.5em,
				minimum height=2.4em,
				text centered,
				fill=black!2},
			microbox/.style={
				rectangle,
				rounded corners,
				draw=black, thick,
				text width=4.2em,
				minimum height=2.2em,
				text centered,
				fill=black!2},
			arr/.style={-{Stealth[length=2mm]}, shorten <=4pt, shorten >=4pt},
			darr/.style={-{Stealth[length=2mm]}, shorten <=4pt, shorten >=4pt, densely dotted}
			]
			\node[microbox] (21) {$(2,1)$};
			\node[right = 13mm of 21] (2temp) {$\cdots$};
			\node[microbox, right = 35mm of 21] (210) {$(2,10)$};
			\node[statebox, left = 30mm of 21] (11) {active\\$(1,1)$};
			\node[draw = none, fill = none, left = 48mm of 210] (test) {};
			\node[statebox, below = 28mm of test] (3) {dead\\$(3,1)$};
			\node[right = 0pt] at ($(210.south)+(0,1.1)$) {disabled};
			\draw[thick, densely dotted, rounded corners]
			($(21.north west)+(-0.45,0.65)$) rectangle ($(210.south east)+(0.65,-0.45)$);
			\path
			($(11.north east)$) edge [arr, bend left=15] node [above, font=\small] {$\nu_{12}(t)\pi_e(t)$} ($(21.north west)+(-0.55,-0.1)$)
			($(21.south west)+(-0.55,0.1)$) edge [arr, bend left=15] node [below, font=\small] {$\beta_{e1}(t)$} ($(11.south east)$)
			($(210.south)+(-1.4,-0.55)$) edge [arr, bend left=15] node [right, font=\small] {$\beta_{e3}(t)$} ($(3.north east)$)
			($(11.south)$) edge [arr, bend right=15] node [below left, font=\small] {$\nu_{13}(t)$} ($(3.north west)$)
			($(21.north east)+(0.0,0.1)$) edge [darr, bend left=10] node [above, font=\small] {$\mu_{ef}^{22}(t)$} ($(2temp.west)+(-0.1,0.1)$)
			($(2temp.east)+(0.1,0.1)$) edge [darr, bend left=10] ($(210.north west)+(0.1,0.1)$)
			($(210.south west)+(0.1,-0.1)$) edge [darr, bend left=10] ($(2temp.east)+(0.1,-0.1)$)
			($(2temp.west)+(-0.1,-0.1)$) edge [darr, bend left=10] ($(21.south east)+(0.0,-0.1)$)
			;
	\end{tikzpicture}}
	\caption{Three macro states with ten unobserved micro states in the disabled macro state.}
	\label{fig:simulation_micro_macro}
\end{figure}

The disabled block is a fitted ten-state aggregate Markov model. For $e\neq f$,
\begin{align*}
\mu_{ef}^{22}(t)=\exp(a_{ef}+b_{ef}t),
\end{align*}
and the exit rates from disabled micro state $e$ to the active and dead macro states are
\begin{align*}
\beta_{e1}(t)=\exp(c_{e1}+d_{e1}t),
\qquad
\beta_{e3}(t)=\exp(c_{e3}+d_{e3}t).
\end{align*}
Entries into disability use the fitted reset distribution
\begin{align*}
\pi_e(t)
=
\frac{\exp(\eta_e^0+\eta_e^1 t)}
{\sum_{f=1}^{10}\exp(\eta_f^0+\eta_f^1 t)},
\qquad e=1,\ldots,10,
\end{align*}
and have rates
\begin{align*}
\mu_{(1,1),(2,e)}(t)=\nu_{12}(t)\pi_e(t).
\end{align*}
The coefficients
\begin{align*}
a_{ef},\,b_{ef},\,c_{e1},\,d_{e1},\,c_{e3},\,d_{e3},\,\eta_e^0,\,\eta_e^1
\end{align*}
are the fitted coefficients from~\cite{AhmadBladt2023} and are not estimated from the payment data.
The total active-to-disabled rate is
\begin{align*}
\nu_{12}(t)
=
\begin{cases}
\exp(q(t)),
& t\le67,\\
0.0009687435,& t>67,
\end{cases}
\end{align*}
where
\begin{align*}
q(t)
&=
72.53851-10.66927t+0.53371t^2-0.012798t^3\\
&\quad
+1.4922\cdot10^{-4}t^4-6.8007\cdot10^{-7}t^5.
\end{align*}
To close the three-state model we add the simple active-to-dead rate
\begin{align*}
\nu_{13}(t)=\exp(-10.5+0.09t).
\end{align*}
This last rate is not essential for the target below; it only completes the simulated life history.

Here, we focus exclusively on disability benefits payed while the policyholder is in the disabled macro-state. We assume that
\begin{align*}
\dd A_{(2,e)}(t)= r_e\,\dd t,
\qquad
r_e=e,\quad e=1,\ldots,10,
\end{align*}
and there are no payments in the active or dead macro states. The target is the projected disabled payment function
\begin{align*}
B_2(t)-B_2(60)
=
\int_{(60,t]}
\frac{\sum_{e=1}^{10}r_e\,\Prob(X(s-)=(2,e))}
{\Prob(X(s-)\in\{(2,1),\ldots,(2,10)\})}
\,\dd s,
\qquad t\in[60,70],
\end{align*}
cf.~equation \eqref{EquilityofEdBEdA}.
There is no closed form for this quantity that is useful here, because the disabled block is a calibrated ten-state inhomogeneous Markov chain. We therefore approximate it once by an uncensored Monte Carlo sample of size $300,\!000$. This approximation is used only as a benchmark, not for estimation.

The observed samples are independently right-censored. We take $L\equiv50$ and
\begin{align*}
R=\min(70,U),\qquad U\sim \operatorname{Unif}(63,75),
\end{align*}
independently of the path. Thus all policies are observed at the beginning of the estimation window $[60,70]$, while exposure decreases toward the right endpoint.

The estimator is evaluated exactly between event times. Let
\begin{align*}
D=\{(2,1),\ldots,(2,10)\}.
\end{align*}
For $s\in[60,70]$, define the aggregate disabled exposure count and aggregate disabled payment rate by
\begin{align*}
\widehat Y(s)
&:=
\sum_{p=1}^n\1_{\{s<R^p\}}\1_{\{X^p(s)\in D\}},\\
\widehat H(s)
&:=
\sum_{p=1}^n\sum_{e=1}^{10}
r_e\1_{\{s<R^p\}}\1_{\{X^p(s)=(2,e)\}}.
\end{align*}
The target estimator then becomes
\begin{align*}
\mathbb B_2^{(n)}(t)-\mathbb B_2^{(n)}(60)
=
\int_{(60,t]}\frac{\widehat H(s)}{\widehat Y(s)}\,\dd s,
\qquad t\in[60,70],
\end{align*}
with the integrand taken as zero when $\widehat Y(s)=0$. We again remark that this formula uses only aggregate exposure counts and aggregate payment rates.

For the confidence intervals, we follow Remark~\ref{remark:ci}, where individual data are used through
\begin{align*}
\widehat Z_p(t)
=
\int_{(60,t]}
n\left\{
\frac{\widehat H_p(s)}{\widehat Y(s)}
-
\frac{\widehat Y_p(s)\widehat H(s)}{\widehat Y(s)^2}
\right\}\dd s,
\end{align*}
where
\begin{align*}
\widehat Y_p(s)
&:=
\1_{\{s<R^p\}}\1_{\{X^p(s)\in D\}},\\
\widehat H_p(s)
&:=
\sum_{e=1}^{10}r_e\1_{\{s<R^p\}}\1_{\{X^p(s)=(2,e)\}}.
\end{align*}
The same zero convention is used when $\widehat Y(s)=0$.
We estimate $\operatorname{Var}(\zeta_2(t))$ by
\begin{align*}
\frac1n\sum_{p=1}^n\widehat Z_p(t)^2
\end{align*}
and plot the corresponding pointwise $95\%$ confidence intervals.

\begin{figure}[!htbp]
\centering
\begin{minipage}[t]{0.49\textwidth}
\centering
{\scriptsize $n=500$\par\vspace{-1mm}}
\includegraphics[width=\linewidth,trim=4mm 3mm 4mm 4mm,clip]{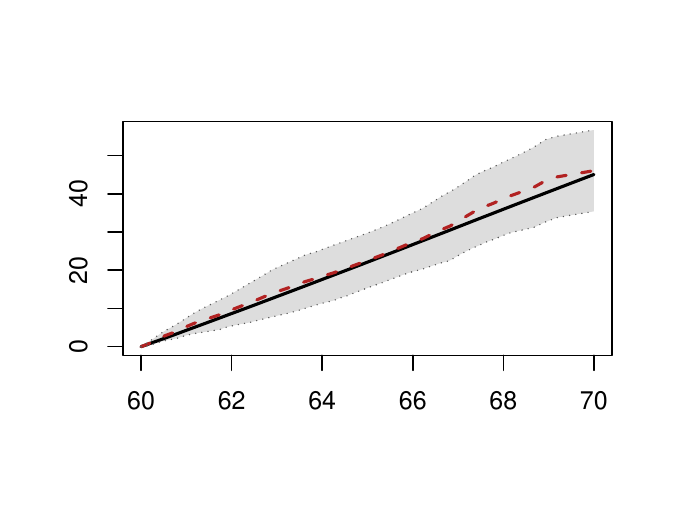}
\end{minipage}\hfill
\begin{minipage}[t]{0.49\textwidth}
\centering
{\scriptsize $n=1000$\par\vspace{-1mm}}
\includegraphics[width=\linewidth,trim=4mm 3mm 4mm 4mm,clip]{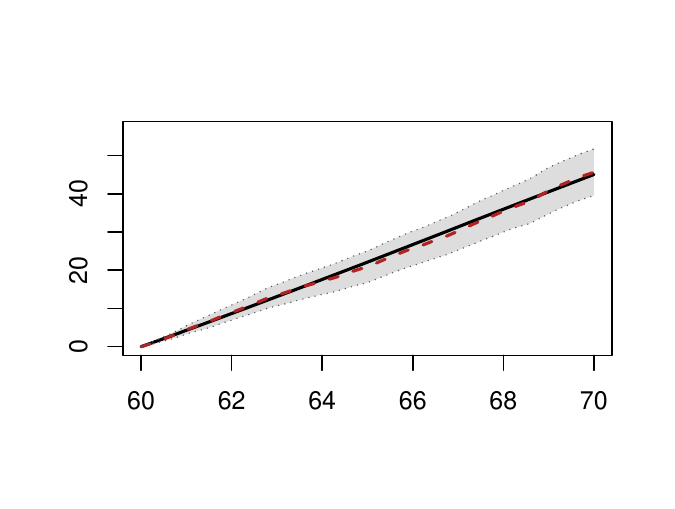}
\end{minipage}

\vspace{-1mm}

\begin{minipage}[t]{0.49\textwidth}
\centering
{\scriptsize $n=5000$\par\vspace{-1mm}}
\includegraphics[width=\linewidth,trim=4mm 3mm 4mm 4mm,clip]{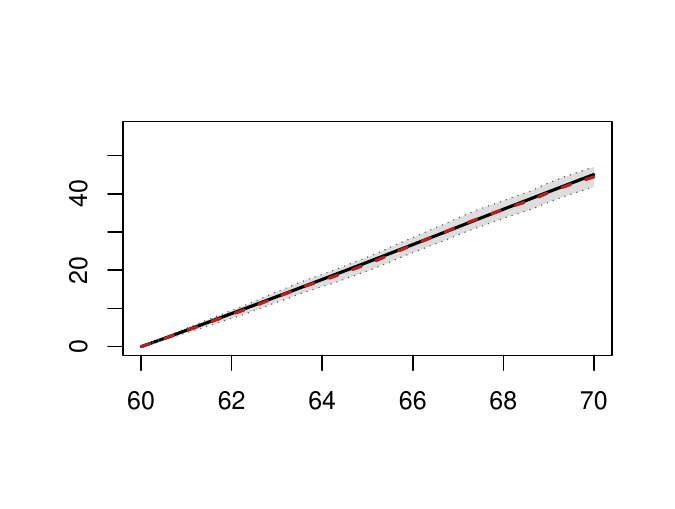}
\end{minipage}\hfill
\begin{minipage}[t]{0.49\textwidth}
\centering
{\scriptsize $n=10000$\par\vspace{-1mm}}
\includegraphics[width=\linewidth,trim=4mm 3mm 4mm 4mm,clip]{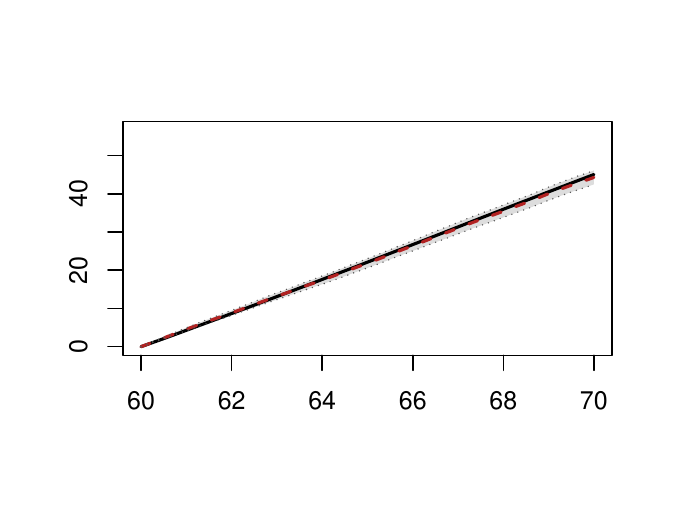}
\end{minipage}
\caption{Estimation of $B_2(t)-B_2(60)$ under independent right-censoring. The solid black curve is the Monte Carlo benchmark, the dashed red curve is the aggregate estimator, and the grey region is the pointwise $95\%$ confidence band.}
\label{fig:final_simulation}
\end{figure}

Figure~\ref{fig:final_simulation} illustrates two features of the theory. First, the estimator is based only on aggregate exposure counts and aggregate payment rates, but it still estimates the projected disabled payment function effectively, which is the primary goal of this paper. The micro process is hidden and highly structured, yet it enters the target only through the conditional average disabled payment rate. Notice that in this numeric example, the true curve is close to linear due to the payment process structure and simulation setup.

Second, the confidence bands reflect the information loss from censoring. The denominator $\widehat Y(s)$ is an exposure quantity, and it decreases toward the end of the interval because policies are censored after age $63$. The bands therefore react not only to $n$, but also to where exposure is available. The panels in Figure~\ref{fig:final_simulation} illustrate the contraction of the pointwise uncertainty and the improved accuracy of the aggregate estimator toward the benchmark at the expected root-$n$ scale. However, we stress this additional uncertainty quantification is only possible when individual information is also available.

\newpage

\bibliographystyle{apalike}
\bibliography{final.bib}

\begin{thebibliography}{}

\bibitem[Ahmad and Bladt, 2023]{AhmadBladt2023}
Ahmad, J. and Bladt, M. (2023).
\newblock Aggregate markov models in life insurance: estimation via the em
  algorithm.

\bibitem[Ahmad and Bladt, 2024]{AhmadBladt2024}
Ahmad, J. and Bladt, M. (2024).
\newblock Aggregate markov models in life insurance: estimation via the em
  algorithm.
\newblock {\em Scandinavian Actuarial Journal}, 2024(6):533--560.

\bibitem[Ahmad et~al., 2023]{AhmadBladtFurrer2023}
Ahmad, J., Bladt, M., and Furrer, C. (2023).
\newblock {Aggregate Markov models in life insurance: Properties and
  valuation}.
\newblock {\em Insurance: Mathematics and Economics}, 113:50--69.

\bibitem[Andersen and Lollike, 2024]{Andersen2024}
Andersen, I. and Lollike, A. (2024).
\newblock Efficient projections of with-profit life insurance using lumping.
\newblock {\em European Actuarial Journal}, 14:21--61.

\bibitem[Andersen et~al., 2012]{AndersenBorganGillKeiding2012}
Andersen, P., Borgan, {\O}., Gill, R., and Keiding, N. (2012).
\newblock {\em Statistical models based on counting processes}.
\newblock Springer Science \& Business Media.

\bibitem[Bladt and Furrer, 2025]{BladtFurrer2025}
Bladt, M. and Furrer, C. (2025).
\newblock {Conditional Aalen--Johansen estimation}.
\newblock {\em Scandinavian Journal of Statistics}, 52(2):873--902.

\bibitem[Furrer and Sandqvist, 2025]{FurrerSandqvist2025}
Furrer, C. and Sandqvist, O. (2025).
\newblock Loss of earning capacity in denmark--an actuarial perspective.
\newblock {\em arXiv preprint arXiv:2501.11578}.

\bibitem[Linton and Nielsen, 1995]{LintonNielsen1995}
Linton, O. and Nielsen, J.~P. (1995).
\newblock A kernel method of estimating structured nonparametric regression
  based on marginal integration.
\newblock {\em Biometrika}, pages 93--100.

\bibitem[Milbrodt and R{\"o}hrs, 2016]{MilbrodtRoehra2016}
Milbrodt, H. and R{\"o}hrs, V. (2016).
\newblock {\em Aktuarielle Methoden der deutschen Privaten
  Krankenversicherung}, volume~34.
\newblock VVW GmbH.

\bibitem[Milhaud and Dutang, 2018]{MilhaudDutang2018}
Milhaud, X. and Dutang, C. (2018).
\newblock Lapse tables for lapse risk management in insurance: a competing risk
  approach.
\newblock {\em European Actuarial Journal}, 8(1):97--126.

\bibitem[Ramlau-Hansen, 1983]{RamlauHansen1988}
Ramlau-Hansen, H. (1983).
\newblock Smoothing counting process intensities by means of kernel functions.
\newblock {\em The Annals of Statistics}, pages 453--466.

\bibitem[Reck et~al., 2025]{ReckSchuppReuss2025}
Reck, L., Schupp, J., and Reu\ss, A. (2025).
\newblock A multistate analysis of policyholder behaviour in life
  insurance—lasso-based modelling approaches.
\newblock {\em Risks}, 13(4).

\end{thebibliography}

\newpage
\appendix

\section{Empirical process results}\label{app:ep_tools}

This appendix collects the standard results used in the proofs.

We begin with the functional map underlying the inverse-weighted estimator.

\begin{lemma}[Hadamard differentiability of the ratio-integral map]\label{lem:hadamard}
Fix $0\le a<b\le T$. Define
\begin{align*}
\Psi(x,y)(t):=\int_{(a,t]}\frac{1}{x(s-)}\,y(\dd s),\quad t\in[a,b],
\end{align*}
on
\begin{align*}
\mathbb D_\eta:=\{(x,y)\in D([a,b])\times D([a,b]):\inf_{s\in[a,b]}x(s)\ge\eta,\ y\in BV([a,b])\},
\end{align*}
for fixed $\eta>0$. Then $\Psi:\mathbb D_\eta\to D([a,b])$ is Hadamard differentiable at every $(x,y)\in\mathbb D_\eta$, tangentially to $D([a,b])\times D([a,b])$, with derivative
\begin{align}
\Psi'_{(x,y)}(h_1,h_2)(t)
&=
\int_{(a,t]}\frac{1}{x(s-)}\,h_2(\dd s)
-\int_{(a,t]}\frac{h_1(s-)}{x(s-)^2}\,y(\dd s),
\label{eq:hadamard_deriv}
\end{align}
where the Stieltjes integrals are understood by integration by parts.
\end{lemma}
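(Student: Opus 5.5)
The plan is to write the difference quotient along an arbitrary admissible sequence and split it into a ``denominator'' term and a ``numerator'' term. Fix $(x,y)\in\mathbb D_\eta$, $t_n\downarrow 0$, and $(h_{1n},h_{2n})\to(h_1,h_2)$ uniformly on $[a,b]$. Assume $(x_n,y_n):=(x+t_nh_{1n},y+t_nh_{2n})\in\mathbb D_\eta$. Then $y_n\in BV([a,b])$, so $h_{2n}=(y_n-y)/t_n$ is of bounded variation for each fixed $n$, though not uniformly in $n$. Writing $1/x_n-1/x=-t_nh_{1n}/(x_nx)$ gives the exact identity
\begin{align*}
\frac{\Psi(x_n,y_n)-\Psi(x,y)}{t_n}(t)
=\int_{(a,t]}\frac{1}{x_n(s-)}\,h_{2n}(\dd s)
-\int_{(a,t]}\frac{h_{1n}(s-)}{x_n(s-)\,x(s-)}\,y(\dd s).
\end{align*}
This identity needs no regularity of $x$ beyond $x,x_n\ge\eta$.

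The second term is routine. Since $x,x_n\ge\eta$, we have
\begin{align*}
\Big\|\frac{h_{1n}}{x_nx}-\frac{h_1}{x^2}\Big\|_\infty\le \frac{\|h_{1n}-h_1\|_\infty}{\eta^2}+\frac{\|h_1\|_\infty\,t_n\|h_{1n}\|_\infty}{\eta^3}\to0.
\end{align*}
Because $y$ is a fixed BV function, the map $g\mapsto\int_{(a,\cdot]}g(s-)\,y(\dd s)$ is bounded from $\ell^\infty$ to $\ell^\infty$ with norm at most $\mathrm{TV}(y)$. Hence the second term converges uniformly in $t\in[a,b]$ to $-\int_{(a,t]}h_1(s-)x(s-)^{-2}\,y(\dd s)$, which is the second part of \eqref{eq:hadamard_deriv}.

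For the first term I would use the integration-by-parts convention stated in the lemma. For a càdlàg $h$ and a weight $f=1/x$, set
\begin{align*}
\int_{(a,t]}f(s-)\,h(\dd s):=f(t)h(t)-f(a)h(a)-\int_{(a,t]}h(s)\,f(\dd s).
\end{align*}
This agrees with the Lebesgue--Stieltjes integral whenever $h$ is BV. It is a bounded linear map of $h$ in sup norm, with norm at most $2/\eta+\mathrm{TV}(1/x)$, and this is exactly what gives the derivative in \eqref{eq:hadamard_deriv} a meaning for a non-BV direction $h_2$. I then split
\begin{align*}
\int\frac{1}{x_n(s-)}\,h_{2n}(\dd s)=\int\frac{1}{x(s-)}\,h_{2n}(\dd s)+\int\Big(\frac{1}{x_n(s-)}-\frac{1}{x(s-)}\Big)h_{2n}(\dd s).
\end{align*}
In the first piece, the integration-by-parts representation together with $h_{2n}\to h_2$ uniformly gives uniform convergence to $\int_{(a,t]}x(s-)^{-1}h_2(\dd s)$, by continuity of the bounded linear map just described.

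The remaining cross term equals $-\int h_{1n}(s-)\,\big(x_n(s-)x(s-)\big)^{-1}(y_n-y)(\dd s)$, and I expect it to be the main obstacle. Its integrand is $O(\|h_{1n}\|_\infty/\eta^2)$, but the integrator $y_n-y=t_nh_{2n}$ is small only in sup norm, not in variation. I would first apply integration by parts to this term, moving the differential onto the weight $h_{1n}/(x_nx)$. Boundary terms are then $O(t_n\|h_{2n}\|_\infty)\to0$, and the integral term is bounded by $t_n\|h_{2n}\|_\infty$ times the variation of that weight. Whether this variation can be kept $o(1/t_n)$ along the sequence is precisely the delicate point. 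The first thing I would try is to exploit that the statement's convention reads every Stieltjes integral, including $\Psi(x_n,y_n)$ itself, in the integration-by-parts sense with the BV function as integrator. In that reading the estimate reduces to boundary terms plus a term controlled by $\mathrm{TV}(y)$, and so the cross term vanishes uniformly in $t$. Combining the three limits then gives uniform convergence of the difference quotient to $\Psi'_{(x,y)}(h_1,h_2)$. Linearity and continuity of $\Psi'_{(x,y)}$ on $D([a,b])^2$ follow from the two operator bounds above.
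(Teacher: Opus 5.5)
\textbf{The cross term is a genuine gap, and on $\mathbb D_\eta$ as written it cannot be closed.} Your exact identity is correct. So is the treatment of the term integrated against $y$. The convergence of $\int_{(a,t]} x(s-)^{-1}h_{2n}(\dd s)$ is also fine, provided $1/x$ has bounded variation. Your bound $2/\eta+\mathrm{TV}(1/x)$ assumes this silently, but it does not follow from $x\in D([a,b])$, $x\ge\eta$. You need $x\in BV([a,b])$ at the base point, which does hold for $W_i^{\mathrm c}$ by \eqref{eq:key_decomp_pop}.

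Your proposed resolution of the cross term does not work. Since $y_n\in BV$, $\Psi(x_n,y_n)$ is an ordinary Lebesgue--Stieltjes integral, and its value does not depend on any integration-by-parts reading. No rearrangement bounds the cross term by boundary terms plus something controlled by $\mathrm{TV}(y)$. In fact the lemma fails on $\mathbb D_\eta$. Take $\eta\le 1/2$, $(x,y)=(1,0)$, $t_n=1/n$, $h_{1n}(s)=n^{-1/2}\cos(n^2s)$ and $h_{2n}(s)=n^{-1/2}\sin(n^2s)$. Then $(h_{1n},h_{2n})\to(0,0)$ uniformly. For $n\ge2$, $x_n\ge 1-2^{-3/2}>\eta$ and $y_n=n^{-3/2}\sin(n^2s)$ is smooth, so $(x_n,y_n)\in\mathbb D_\eta$. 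Yet
\[
\frac{\Psi(x_n,y_n)-\Psi(1,0)}{t_n}(t)
=\int_a^t\frac{n^{3/2}\cos(n^2s)}{1+n^{-3/2}\cos(n^2s)}\,\dd s
=-\int_a^t\cos^2(n^2s)\,\dd s+O(n^{-1/2})
\longrightarrow-\frac{t-a}{2}
\]
uniformly in $t$, whereas $\Psi'_{(1,0)}(0,0)=0$. Here $\mathrm{TV}(y)=0$, while your weight $h_{1n}/(x_nx)$ has variation of order $n^{3/2}\gg 1/t_n$. So $t_n\|h_{2n}\|_\infty$ times that variation stays of order one.

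\textbf{The missing ingredient is a uniform bound on the variation of the second coordinate.} This is exactly the restriction in the standard integration-map lemma of van der Vaart and Wellner. Restrict to $\mathbb D_{\eta,M}:=\{(x,y)\in\mathbb D_\eta:\mathrm{TV}(y)\le M\}$. Set $g_n:=h_{1n}/(x_nx)$ and $g:=h_1/x^2\in D([a,b])$. The cross term is then $-\int_{(a,t]}g_n(s-)\,(y_n-y)(\dd s)$ with $\mathrm{TV}(y_n-y)\le 2M$, and it vanishes uniformly in $t$:
\begin{enumerate}
\item Replacing $g_n$ by $g$ costs at most $2M\|g_n-g\|_\infty\to0$.
\item Since $g$ is c\`adl\`ag, approximate $g(s-)$ within $\varepsilon$ by a left-continuous step function with $k$ steps. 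Its integral against $y_n-y=t_nh_{2n}$ is a finite sum bounded by $2k\|g\|_\infty t_n\|h_{2n}\|_\infty\to0$.
\item The remaining error is at most $2M\varepsilon$.
\end{enumerate}
The rest of your argument then goes through.

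This restricted version is all that Theorem~\ref{th:weak_na_markov} needs. Almost surely,
\[
\mathrm{TV}(\mathbb G_i^{(n)})\le\frac1n\sum_{p=1}^n\int_{[0,T]}|\dd A^p(u)|\to\E\Big[\int_{[0,T]}|\dd A(u)|\Big]<\infty .
\]
Hence for large $M$ the estimator lies in $\mathbb D_{\eta_i/2,M}$ with probability tending to one.
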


Continuity used in the consistency arguments follows immediately.

\begin{corollary}[Continuity of the ratio-integral map]\label{lem:map_cont}
Fix $0\le a<b\le T$ and $(x,y)\in\mathbb D_\eta$. If
\begin{align*}
(x_n,y_n)\to(x,y)\quad\text{in }D([a,b])\times D([a,b])
\end{align*}
for the uniform norm, with $(x_n,y_n)\in\mathbb D_\eta$ for all large $n$, then
\begin{align*}
\sup_{t\in[a,b]}\left|\Psi(x_n,y_n)(t)-\Psi(x,y)(t)\right|\to0.
\end{align*}
\end{corollary}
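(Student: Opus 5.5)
The statement to prove is Corollary~\ref{lem:map_cont}. I would not use the full Hadamard derivative. Instead I would decompose the difference directly:
\begin{align*}
\Psi(x_n,y_n)(t)-\Psi(x,y)(t)
=\int_{(a,t]}\frac{1}{x(s-)}\,(y_n-y)(\dd s)
+\int_{(a,t]}\Big(\frac{1}{x_n(s-)}-\frac{1}{x(s-)}\Big)\,y_n(\dd s).
\end{align*}
The second term is the easy one. On $\mathbb D_\eta$ we have $x_n,x\ge\eta$, so
\begin{align*}
\Big|\frac{1}{x_n(s-)}-\frac{1}{x(s-)}\Big|\le \eta^{-2}\|x_n-x\|_\infty .
\end{align*}
Hence the second term is bounded uniformly in $t$ by $\eta^{-2}\|x_n-x\|_\infty\,\mathrm{TV}(y_n;[a,b])$. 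This needs the total variations of $y_n$ to be bounded.

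In the application, boundedness of the variations comes from the structure of the data. There $y_n=\mathbb G_i^{(n)}$ is a difference of two nondecreasing empirical processes. Their total variations are $\frac1n\sum_p A^{p,\pm}(T)$, which converge almost surely by the strong law under Assumption~\ref{ass:markov_moments}. So I would state the corollary with the implicit standing requirement $\sup_n\mathrm{TV}(y_n)<\infty$, or alternatively derive it from monotonicity, and note this explicitly.

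The first term is the main obstacle. Uniform convergence $y_n\to y$ alone does not control $\int f\,\dd(y_n-y)$ unless $f$ has bounded variation. So I would integrate by parts with $h=y_n-y$ and $f(s)=1/x(s-)$:
\begin{align*}
\int_{(a,t]} f\,\dd h = f(t+)h(t)-f(a+)h(a)-\int_{(a,t]} h(s)\,f(\dd s),
\end{align*}
with the appropriate left/right-limit conventions. This is bounded by $\|h\|_\infty\big(2\eta^{-1}+\mathrm{TV}(1/x)\big)$. Since $x\ge\eta$, we have $\mathrm{TV}(1/x)\le\eta^{-2}\mathrm{TV}(x)$. That is finite when $x$ has bounded variation, which holds in the application: $W_i^{\mathrm c}$ is a difference of finite monotone functions by \eqref{eq:key_decomp_pop}.

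If one insists that $x$ is only càdlàg, I would fall back to an approximation argument. Approximate $1/x(s-)$ uniformly within $\varepsilon$ by a step function $f_\varepsilon$ with finitely many jumps, which is possible for càglàd functions. The error $\int (f-f_\varepsilon)\,\dd(y_n-y)$ is then at most $\varepsilon(\sup_n\mathrm{TV}(y_n)+\mathrm{TV}(y))$, while $\int f_\varepsilon\,\dd(y_n-y)\to0$ uniformly because $f_\varepsilon$ has bounded variation. Letting $n\to\infty$ and then $\varepsilon\to0$ concludes. In either route the boundedness of $\mathrm{TV}(y_n)$ is the key input, so I would make it explicit.
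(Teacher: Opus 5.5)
Your argument is correct, but it proves a strengthened statement. You were right to insist on the extra hypothesis: as printed, the corollary is false, so no proof can do without some uniform variation bound.

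Here is a counterexample. Take any $[a,b]$, $\eta=1/2$, $x\equiv1$, $y\equiv0$, $y_n(s)=n^{-1/2}\sin(ns)$ and $x_n=1+n^{-1/4}g_n$. Here $g_n$ is the right-continuous $\{-1,1\}$-valued step function that agrees with $\operatorname{sgn}\cos(ns)$ away from the zeros of $\cos(ns)$. Then $(x_n,y_n)\in\mathbb D_{1/2}$ for $n\ge16$ and $(x_n,y_n)\to(x,y)$ uniformly. Using $1/(1+\epsilon g)=1-\epsilon g+\epsilon^2/(1+\epsilon g)$ for $g=\pm1$ and $\epsilon=n^{-1/4}$,
\[
\Psi(x_n,y_n)(b)=\int_a^b\frac{n^{1/2}\cos(ns)}{1+n^{-1/4}g_n(s-)}\,\dd s=-n^{1/4}\int_a^b|\cos(ns)|\,\dd s+O(1)\longrightarrow-\infty,
\]
whereas $\Psi(x,y)\equiv0$. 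Hence $\sup_n\mathrm{TV}(y_n;[a,b])<\infty$ has to be added. Alternatively, $\sup_n\mathrm{TV}(x_n;[a,b])<\infty$ suffices if you split instead as $\int 1/x_n(s-)\,\dd(y_n-y)(s)+\int(1/x_n(s-)-1/x(s-))\,y(\dd s)$. With either bound, your decomposition, the $\eta^{-2}\|x_n-x\|_\infty$ estimate, and either the integration-by-parts or the step-function argument give a complete proof.

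The paper offers no separate proof. It says continuity ``follows immediately'' from Lemma~\ref{lem:hadamard}, i.e.\ it reads continuity off Hadamard differentiability (take $t_n=\|(x_n,y_n)-(x,y)\|_\infty^{1/2}$ and increments $((x_n,y_n)-(x,y))/t_n\to0$). That lemma, however, is stated on $\mathbb D_\eta$ without the restriction to integrators of total variation at most a fixed $M$. The standard integration-map result (van der Vaart and Wellner, 1996) carries that restriction. The example above shows the lemma fails without it: the difference quotients at $(1,0)$ diverge. Its derivative formula also tacitly needs $1/x$ to be of bounded variation for the integration by parts to make sense. So the paper's route has precisely the gap you flagged.

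Your direct route is more elementary and makes the needed hypotheses explicit. Through the regulated-function approximation it even covers c\`adl\`ag $x$ of unbounded variation. The paper's route, once repaired by working on a $BV_M$ domain, has the advantage that one lemma serves both the consistency proof and the delta method in Theorem~\ref{th:weak_na_markov}.

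As you observe, the extra condition holds almost surely in Theorem~\ref{th:strong_cons}. Indeed, $\mathrm{TV}(\mathbb G_i^{(n)};[0,T])\le\frac1n\sum_{p=1}^n\int_{[0,T]}|\dd A^p(u)|$, which converges almost surely by the strong law and is therefore bounded. So the consistency theorem itself is unaffected.
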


The next result is the uniform law used for monotone c\`adl\`ag component processes.

\begin{theorem}[Abstract Glivenko--Cantelli for monotone c\`adl\`ag maps]\label{th:abstract_gc_cadlag}
Let $T:[0,\theta]\to\R$ be bounded, nondecreasing, and c\`adl\`ag. Let $T^{(n)}:[0,\theta]\to\R$ be random nondecreasing c\`adl\`ag maps. If for every $t\in[0,\theta]$,
\begin{align*}
T^{(n)}(t)\as T(t),
\quad
T^{(n)}(t-)\as T(t-),
\end{align*}
then
\begin{align*}
\sup_{0\le t\le\theta}|T^{(n)}(t)-T(t)|\as0.
\end{align*}
\end{theorem}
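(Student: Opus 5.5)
The statement to prove is the abstract Glivenko--Cantelli result, Theorem~\ref{th:abstract_gc_cadlag}. I would follow the classical Pólya-type argument used for the empirical distribution function. The idea is to reduce uniform convergence to convergence at finitely many grid points, using monotonicity of both $T^{(n)}$ and $T$.

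\textbf{Step 1: the grid.} Fix $\varepsilon>0$. Since $T$ is nondecreasing and bounded on $[0,\theta]$, its total increase $T(\theta)-T(0)$ is finite. I would construct a finite partition $0=t_0<t_1<\dots<t_k=\theta$ with
\[
T(t_j-)-T(t_{j-1})\le\varepsilon \quad\text{for all } j.
\]
To do this, define $t_j$ recursively by
\[
t_j:=\inf\{t>t_{j-1}:T(t)-T(t_{j-1})>\varepsilon\}\wedge\theta .
\]
Right-continuity gives $t_j>t_{j-1}$, and the infimum gives $T(t_j-)-T(t_{j-1})\le\varepsilon$. Each step with $t_j<\theta$ increases $T$ by more than $\varepsilon$, so there are at most $(T(\theta)-T(0))/\varepsilon+1$ points. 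Jumps of $T$ larger than $\varepsilon$ sit at grid points. This is harmless, because we also control the left limits there.

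\textbf{Step 2: the almost sure event.} By hypothesis there is a single null set outside of which
\[
T^{(n)}(t_j)\to T(t_j) \quad\text{and}\quad T^{(n)}(t_j-)\to T(t_j-)
\]
for all $j$. This is a finite union of null sets for fixed $\varepsilon$. Taking $\varepsilon=1/m$ for $m\in\mathbb N$ gives a countable union, still null.

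\textbf{Step 3: sandwich.} For $t\in[t_{j-1},t_j)$, monotonicity gives
\[
T^{(n)}(t_{j-1})-T(t_j-)\le T^{(n)}(t)-T(t)\le T^{(n)}(t_j-)-T(t_{j-1}).
\]
Hence
\[
|T^{(n)}(t)-T(t)|\le \max_j\big(|T^{(n)}(t_j-)-T(t_j-)|+|T^{(n)}(t_{j-1})-T(t_{j-1})|\big)+\varepsilon .
\]
The point $t=\theta$ is a grid point and is handled directly. Therefore
\[
\limsup_n\sup_t|T^{(n)}(t)-T(t)|\le\varepsilon
\]
on the good event, and letting $\varepsilon=1/m\downarrow0$ gives the claim.

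I expect the only delicate point to be the construction of the grid at discontinuities of $T$. A partition based only on values $T(t_j)$ fails near jumps, which is why the hypothesis on left limits is needed. I would handle this via the left-limit formulation $T(t_j-)-T(t_{j-1})\le\varepsilon$ above, and check that the recursion terminates using boundedness of $T$.
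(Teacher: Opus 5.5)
Your proof is correct. The paper gives no proof of its own and only calls the result the compact-interval version of the abstract Glivenko--Cantelli lemma; your P\'olya-type grid argument, with a deterministic partition satisfying $T(t_j-)-T(t_{j-1})\le\varepsilon$, a countable union of null sets over $\varepsilon=1/m$, and the monotone sandwich, is the standard proof of that lemma.

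One small slip in Step 1: right-continuity only guarantees $T(t_j)-T(t_{j-1})\ge\varepsilon$ for $t_j<\theta$, not strictly more (take $T(t)=t$). The number of steps is still at most $(T(\theta)-T(0))/\varepsilon+1$, so the recursion terminates and nothing else changes.
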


This is the compact-interval version of the abstract Glivenko--Cantelli lemma.

\begin{lemma}[Bracketing bound for monotone cumulative classes]\label{lem:mono_bracketing}
Let $\{V_t:t\in[0,\theta]\}$ be pointwise measurable, c\`adl\`ag, and nondecreasing in $t$, and set
\begin{align*}
\mathcal F_V:=\{f_t:t\in[0,\theta]\},\quad f_t(\omega):=V_t(\omega),
\end{align*}
with envelope $F(\omega):=V_\theta(\omega)\in L^2(\Prob)$. Then for every $\varepsilon\in(0,1]$,
\begin{align*}
N_{[\,]}\left(\varepsilon\|F\|_{L^2(\Prob)},\mathcal F_V,L^2(\Prob)\right)\le \frac{2}{\varepsilon^2}+1.
\end{align*}
\end{lemma}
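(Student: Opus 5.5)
We prove Lemma~\ref{lem:mono_bracketing} by building $L^2$-brackets directly from the monotone structure. The idea is to partition $[0,\theta]$ according to the increments of the nondecreasing map $t\mapsto \E[(V_t-V_0)^2]$, or more conveniently of a quantity that controls the $L^2$ distance between $V_s$ and $V_t$ for $s\le t$.

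\textbf{Step 1: reduction to $V_0=0$ and $V\ge 0$.} Since the envelope is $F=V_\theta$, we implicitly need $|V_t|\le V_\theta$. We therefore assume, as in all applications in the paper, that $V_t\ge 0$; then $0\le V_s\le V_t\le V_\theta$ for $s\le t$. For $s\le t$ we have $0\le V_t-V_s\le V_\theta$, hence
\[
\E[(V_t-V_s)^2]\le \E[V_\theta(V_t-V_s)].
\]
Define the control function $\mu(t):=\E[V_\theta V_t]$. It is nondecreasing in $t$, right-continuous by dominated convergence (the dominating function is $V_\theta^2\in L^1$), and ranges in $[\mu(0),\mu(\theta)]\subseteq[0,\|F\|^2]$. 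For $s\le t$ we get $\|V_t-V_s\|_{L^2}^2\le \mu(t)-\mu(s)$.

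\textbf{Step 2: the grid.} Set $\delta:=\varepsilon^2\|F\|^2$. We choose points $0=t_0<t_1<\dots<t_m=\theta$ such that on each open cell $(t_{k-1},t_k)$ the increment of $\mu$ is at most $\delta$. The standard choice is $t_k:=\inf\{t:\mu(t)\ge \mu(t_{k-1})+\delta\}\wedge\theta$, which gives $m\le \|F\|^2/\delta+1=\varepsilon^{-2}+1$. Jumps of $\mu$ exceeding $\delta$ are handled by using both the left-limit brackets $[V_{t_{k-1}},V_{t_k-}]$ for $t\in[t_{k-1},t_k)$ and the singleton brackets $[V_{t_k},V_{t_k}]$. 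This at most doubles the count, giving $\le 2\varepsilon^{-2}+1$ brackets, which matches the stated constant.

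\textbf{Step 3: verifying the brackets.} For $t\in[t_{k-1},t_k)$, monotonicity gives $V_{t_{k-1}}\le V_t\le V_{t_k-}$, where $V_{t_k-}$ exists as a left limit and is measurable as a monotone limit. The width satisfies $\|V_{t_k-}-V_{t_{k-1}}\|^2\le \mu(t_k-)-\mu(t_{k-1})\le\delta$; the bound for the left limit follows from Step~1 via monotone convergence. The singletons have width $0$. Every bracket therefore has $L^2$-width at most $\varepsilon\|F\|$.

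The main obstacle is the careful bookkeeping in Step~2 when $\mu$ has large jumps or flat parts. This is why brackets with left limits are needed, and why the constant is $2/\varepsilon^2+1$ rather than $1/\varepsilon^2$. If nonnegativity is not available, we would instead apply the argument to $V_t-V_0$ with envelope adjusted, or split $V$ into two monotone parts, at the cost of the constant.
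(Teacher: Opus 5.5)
Your argument follows the paper's route. Your $\mu(t)=\E[V_\theta V_t]$ is the paper's $\Gamma$, and you use the same brackets $[V_{t_{k-1}},V_{t_k-}]$ on $[t_{k-1},t_k)$ with the same width bound $\E[(u-l)^2]\le\E[(u-l)F]=\mu(t_k-)-\mu(t_{k-1})$. You go slightly further in two places. You construct an explicit grid with $m\le\varepsilon^{-2}+1$, whereas the paper only asserts that a suitable partition exists. You also state the nonnegativity $0\le V_t\le V_\theta$ as an assumption. The paper uses it without listing it as a hypothesis, and without it the statement can fail: for $V_t=t-1$ on $[0,1]$ we have $F=0$.

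There is one concrete slip, in the final count. Adding a singleton $[V_{t_k},V_{t_k}]$ for every $k=1,\dots,m$ gives $2m\le 2\varepsilon^{-2}+2$ brackets, not $2\varepsilon^{-2}+1$. For $V_t=\1_{\{t\ge 1/2\}}$ on $[0,1]$ with $\varepsilon=1$, your grid is $t_1=1/2$, $t_2=1$, which yields four brackets against the claimed three.

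Your reason for the interior singletons is also off, and they are redundant. A jump of $\mu$ at $t_k$ larger than $\delta$ causes no trouble. The cell ending at $t_k$ uses the left limit $V_{t_k-}$ as its upper bracket, and $t_k$ itself lies in the next cell $[t_k,t_{k+1})$, whose lower bracket is $V_{t_k}$. The only parameter value the half-open cells miss is $t=\theta$. So the single extra bracket $[V_\theta,V_\theta]$ suffices, giving $m+1\le\varepsilon^{-2}+2\le 2\varepsilon^{-2}+1$ for $\varepsilon\le 1$.

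That endpoint bracket is genuinely needed when $V$ jumps at $\theta$. The paper's cells $[t_{r-1},t_r)$, $r=1,\dots,m$, also leave $t=\theta$ uncovered. So once you trim the count, your version is the more careful one.
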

\begin{proof}
Define
\begin{align*}
\Gamma(t):=\E[V_tF],\quad t\in[0,\theta].
\end{align*}
Since $0\le V_t\le F$, $\Gamma$ is nondecreasing c\`adl\`ag, with
\begin{align*}
0\le \Gamma(\theta)-\Gamma(0)\le \E[F^2]=\|F\|_{L^2(\Prob)}^2.
\end{align*}
Choose a partition
\begin{align*}
0=t_0<t_1<\cdots<t_m=\theta
\end{align*}
such that
\begin{align*}
\Gamma(t_r-)-\Gamma(t_{r-1})\le \varepsilon^2\|F\|_{L^2(\Prob)}^2,\quad r=1,\dots,m,
\end{align*}
and
\begin{align*}
m\le 2\varepsilon^{-2}+1.
\end{align*}
For each $r$, define brackets
\begin{align*}
l_r:=f_{t_{r-1}},\quad u_r:=f_{t_r-}.
\end{align*}
Monotonicity in $t$ implies that for every $t\in[t_{r-1},t_r)$,
\begin{align*}
l_r\le f_t\le u_r.
\end{align*}
Also, $0\le u_r-l_r\le F$, hence
\begin{align*}
\|u_r-l_r\|_{L^2(\Prob)}^2
&=\E[(u_r-l_r)^2]
\le \E[(u_r-l_r)F]\\
&=\Gamma(t_r-)-\Gamma(t_{r-1})
\le \varepsilon^2\|F\|_{L^2(\Prob)}^2.
\end{align*}
Therefore each bracket has $L^2$-width at most $\varepsilon\|F\|_{L^2(\Prob)}$, and at most $2\varepsilon^{-2}+1$ brackets are needed.
\end{proof}

\begin{theorem}[Bracketing CLT, Donsker's theorem]\label{Donsker}
A measurable class $\mathcal{F}$ with envelope $F\in L^2(\Prob)$ is $\Prob$-Donsker if
\begin{align*}
J_{[\,]}(1,\mathcal F,L^2(\Prob))
:=
\int_{(0,1)}
\sqrt{\log N_{[\,]}(\varepsilon,\mathcal F,L^2(\Prob))}
\,\dd\varepsilon
<\infty.
\end{align*}
\end{theorem}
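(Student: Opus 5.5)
This is the bracketing CLT (Ossiander's theorem, as in van der Vaart and Wellner, Theorem 2.5.6). The plan is to prove it in the usual two parts: convergence of the finite-dimensional marginals, and asymptotic equicontinuity of the empirical process $\mathbb G_n=\sqrt n(\mathbb P_n-\Prob)$ indexed by $\mathcal F$ with respect to the $L^2(\Prob)$ semimetric $\rho$. Finite-dimensional convergence is immediate from the multivariate CLT, since every $f\in\mathcal F$ satisfies $|f|\le F\in L^2(\Prob)$. Finiteness of the entropy integral implies $N_{[\,]}(\varepsilon,\mathcal F,L^2(\Prob))<\infty$ for every $\varepsilon>0$. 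Hence $(\mathcal F,\rho)$ is totally bounded, and weak convergence in $\ell^\infty(\mathcal F)$ to a tight Gaussian limit then follows from the standard characterisation: marginal convergence plus asymptotic $\rho$-equicontinuity.

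For equicontinuity, I would fix a sequence of nested bracket partitions $\{\mathcal F_{qk}\}_k$ at levels $2^{-q}$, $q\ge q_0$. Refining intersections of successive partitions lets us assume nestedness, with cardinality $N_q$ satisfying $\sum_q 2^{-q}\sqrt{\log N_q}\lesssim J_{[\,]}$. In each cell I choose a representative $\pi_q f$ and write $\Delta_q f$ for the bracket width, which has $L^2$ norm at most $2^{-q}$. The chaining decomposition is
\[
f-\pi_{q_0}f=(f-\pi_{q_0}f)\1_{A_{q_0}}+\sum_{q>q_0}(f-\pi_{q}f)\1_{B_q}+\sum_{q>q_0}(\pi_qf-\pi_{q-1}f)\1_{A_{q-1}},
\]
where $A_q$ and $B_q$ are the events "the bracket widths up to level $q$ stay below the truncation level $a_q=2^{-q}/\sqrt{\log N_{q+1}}$" and "the width first exceeds it at level $q$". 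The bounded, truncated links are handled by Bernstein's inequality combined with the maximal inequality $\E\max_{i\le N}|\mathbb G_n g_i|\lesssim \log N\,\|g\|_\infty/\sqrt n+\|g\|_{L^2}\sqrt{\log N}$. The truncation thresholds are chosen so that the two terms balance and the total sums to a multiple of $J_{[\,]}(\delta)$. The large-width pieces are controlled using $|\mathbb G_n f|\le|\mathbb G_n g|+2\sqrt n\,\Prob g$ for $0\le f\le g$ together with Chebyshev-type bounds $\sqrt n\,\Prob \Delta\1\{\Delta>a\}\le \|\Delta\|^2_{L^2}/a$. Letting $\delta\downarrow0$ yields
\[
\limsup_n \E^*\sup_{\rho(f,g)<\delta}|\mathbb G_n(f-g)|\lesssim J_{[\,]}(\delta,\mathcal F,L^2(\Prob))+\sqrt n\,\Prob F\1\{F>\sqrt n\,a(\delta)\}\to 0.
\]
The second term vanishes because $F\in L^2(\Prob)$.

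The main obstacle is the bookkeeping in the chaining step. The nested partitions must be built, and the truncation levels tuned, so that the Bernstein terms and the tail terms both telescope into the entropy integral. I would follow the construction in van der Vaart and Wellner verbatim rather than re-derive it, and cite that theorem directly. Within this paper the statement serves as a quoted standard tool: the bracketing bound of Lemma~\ref{lem:mono_bracketing} makes $\log N_{[\,]}$ of order $\log(1/\varepsilon)$, so the integral is trivially finite in every application.
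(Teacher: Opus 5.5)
Your plan matches the paper, which gives no proof of this statement and simply quotes it as the standard bracketing CLT (van der Vaart--Wellner, Theorem 2.5.6). Your remark that Lemma~\ref{lem:mono_bracketing} yields $\log N_{[\,]}(\varepsilon)\lesssim\log(1/\varepsilon)$ is exactly how the paper uses it.

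If you write out the chaining sketch, though, it contains errors that make it false as stated.

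\begin{itemize}
\item The leading term of your decomposition must carry $\1_{B_{q_0}}$, not $\1_{A_{q_0}}$; here $B_{q_0}$ is the event that the width at level $q_0$ already exceeds its threshold. As written, the right-hand side equals $0$ when $\Delta_{q_0}f$ is above its threshold and $2(f-\pi_{q_0}f)$ otherwise.
\item The truncation levels must be $\sqrt n\,a_q$, not $a_q$. The correct tail bound is $\sqrt n\,\Prob\Delta\1\{\Delta>\sqrt n\,a\}\le\|\Delta\|_{L^2}^2/a$. With threshold $a$ you only get $\sqrt n\,\|\Delta\|_{L^2}^2/a$, which does not vanish. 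The same factor $\sqrt n$ is what cancels the $1/\sqrt n$ in the Bernstein term, giving $a_{q-1}\log N_q=2^{1-q}\sqrt{\log N_q}$.
\item In your final display, the $n$-dependent term should not sit on the right of a $\limsup_n$. For fixed $\delta$, bound it by $a(\delta)^{-1}\Prob F^2\1\{F>\sqrt n\,a(\delta)\}$. Then let $n\to\infty$, which kills it by dominated convergence, and only afterwards let $\delta\downarrow0$.
\end{itemize}
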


\begin{lemma}[From marginal to joint weak convergence]\label{stack_convergence}
Let $X_n\dist X$ in $\ell^\infty(T)$ and $Y_n\dist Y$ in $\ell^\infty(S)$, with tight limits. If all mixed finite-dimensional vectors of $(X_n,Y_n)$ converge to those of $(X,Y)$, then $(X_n,Y_n)\dist (X,Y)$ in $\ell^\infty(T)\times\ell^\infty(S)$.
\end{lemma}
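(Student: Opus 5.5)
The statement to prove is Lemma~\ref{stack_convergence}: marginal weak convergence with tight limits, plus convergence of mixed finite-dimensional laws, gives joint weak convergence. The plan is to use the standard characterization of weak convergence in $\ell^\infty$. A sequence converges weakly to a tight limit if and only if two things hold (cf.\ van der Vaart and Wellner, Theorem 1.5.4 together with Lemma 1.4.3/1.4.4):
\begin{enumerate}[label=(\roman*)]
\item it is asymptotically tight and asymptotically measurable;
\item its finite-dimensional marginals converge to those of the limit.
\end{enumerate}
I identify $\ell^\infty(T)\times\ell^\infty(S)$ with $\ell^\infty(T\sqcup S)$ via the disjoint union index set. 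Under this identification the sup norm of the product (maximum of the two norms) coincides with the norm of $\ell^\infty(T\sqcup S)$. The pair $(X_n,Y_n)$ is then a single process indexed by $T\sqcup S$. Its finite-dimensional marginals are exactly the mixed finite-dimensional vectors, which converge by hypothesis, so condition (ii) is immediate.

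For condition (i), I use that $X_n\dist X$ with $X$ tight implies $X_n$ is asymptotically tight and asymptotically measurable, and likewise for $Y_n$. Asymptotic tightness of the pair follows directly. Given $\varepsilon>0$, choose compacts $K_1\subset\ell^\infty(T)$ and $K_2\subset\ell^\infty(S)$ such that
\[
\liminf_n \Prob_*(X_n\in K_1^\delta)\ge 1-\varepsilon/2, \qquad \liminf_n \Prob_*(Y_n\in K_2^\delta)\ge 1-\varepsilon/2
\]
for every $\delta>0$. Then $K_1\times K_2$ is compact, and its $\delta$-enlargement in the max norm is $K_1^\delta\times K_2^\delta$. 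The outer/inner probability inequality $\Prob_*(A\cap B)\ge 1-\Prob^*(A^c)-\Prob^*(B^c)$ therefore gives
\[
\liminf_n\Prob_*\bigl((X_n,Y_n)\in (K_1\times K_2)^\delta\bigr)\ge 1-\varepsilon .
\]
Asymptotic measurability of the pair follows from Lemma 1.4.4 of van der Vaart and Wellner: if each coordinate is asymptotically measurable and asymptotically tight, so is the joint vector. Alternatively, asymptotic measurability of $\ell^\infty$-valued asymptotically tight processes reduces to asymptotic measurability of the finite-dimensional coordinates, which holds because those are genuine random vectors.

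Theorem 1.5.4 then yields convergence of $(X_n,Y_n)$ to some tight Borel limit, and the limit's finite-dimensional laws are those of $(X,Y)$. A tight Borel law on $\ell^\infty$ is determined by its finite-dimensional marginals (Lemma 1.5.3). Hence the limit is $(X,Y)$, which requires $(X,Y)$ to be jointly tight; that holds because the marginals $X$ and $Y$ are tight.

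The only delicate point is the measurability and outer-probability bookkeeping in condition (i). I expect this to be the main obstacle, and I handle it by citing the product lemma rather than re-deriving it. Once (i) is in place, the fidi identification in the last step is routine.
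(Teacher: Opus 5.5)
Your proof is correct. The paper gives no proof of this lemma at all; it is listed in the appendix among standard empirical-process facts. Your argument is the standard justification the paper implicitly relies on:
\begin{itemize}
\item pass to $\ell^\infty(T\sqcup S)$;
\item get asymptotic tightness of the pair from the product of the marginal compacts (your identity $(K_1\times K_2)^\delta=K_1^\delta\times K_2^\delta$ in the max metric and the outer-probability bound are right);
\item apply van der Vaart--Wellner Theorem~1.5.4;
\item identify the limit through uniqueness of tight laws with given marginals.
\end{itemize}

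Two minor remarks. First, the separate asymptotic-measurability step, which you flag as the main obstacle, is not needed. Theorem~1.5.4 only asks for asymptotic tightness plus weak convergence of the marginals, and asymptotic measurability then follows from Lemma~1.5.2. Second, in the identification step you also need Borel measurability of the pair $(X,Y)$ in the nonseparable product, not only its tightness. This too follows from the tightness, hence separability, of $X$ and $Y$. Alternatively, the second half of Theorem~1.5.4 gives convergence to a version of $(X,Y)$ directly, without invoking Lemma~1.5.3.
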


\begin{theorem}[Functional $\delta$-method]\label{delta_method}
Let $\mathbb D$ and $\mathbb E$ be normed spaces. Let $\phi\colon\mathbb D_{\phi}\subseteq \mathbb D\to \mathbb E$ be Hadamard differentiable at $\theta$ tangentially to $\mathbb D_{0}$. Let $X_n\colon\Omega\to \mathbb D_{\phi}$ be maps with
\begin{align*}
r_n(X_n-\theta)\dist X
\end{align*}
for some sequence of constants $r_n\to \infty$, where $X$ is tight and takes values in $\mathbb D_{0}$. Then
\begin{align*}
r_n\big(\phi(X_n)-\phi(\theta)\big)\dist \phi'_{\theta}(X)
\end{align*}
in $\mathbb E$.
\end{theorem}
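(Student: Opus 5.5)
The plan is the standard route through the extended continuous mapping theorem. Set $Y_n:=r_n(X_n-\theta)$, so that $Y_n\dist X$ in $\mathbb D$ by hypothesis. Introduce the domains $\mathbb D_n:=\{h\in\mathbb D:\theta+h/r_n\in\mathbb D_\phi\}$ and the maps
\begin{align*}
g_n(h):=r_n\big(\phi(\theta+h/r_n)-\phi(\theta)\big),\qquad h\in\mathbb D_n .
\end{align*}
Since $X_n\in\mathbb D_\phi$, we have $Y_n\in\mathbb D_n$ and $g_n(Y_n)=r_n(\phi(X_n)-\phi(\theta))$. The claim is therefore $g_n(Y_n)\dist\phi'_\theta(X)$ in $\mathbb E$.

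\textbf{Step 1 (deterministic convergence).} Tangential Hadamard differentiability is exactly the following statement: for every sequence $h_n\in\mathbb D_n$ with $h_n\to h\in\mathbb D_0$, we have $g_n(h_n)\to\phi'_\theta(h)$. Here $\phi'_\theta$ is continuous and linear on $\mathbb D_0$. Nothing more needs to be checked in this step beyond unwinding the definition with $t_n=1/r_n\to0$.

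\textbf{Step 2 (extended continuous mapping).} I would show that $h_n\to h\in\mathbb D_0$ implying $g_n(h_n)\to g(h)$, with $g:=\phi'_\theta$, transfers to weak convergence whenever the limit is tight and concentrated on $\mathbb D_0$. I would use the portmanteau theorem for outer probabilities. Fix a closed set $F\subseteq\mathbb E$ and put
\begin{align*}
\overline F_k:=\overline{\textstyle\bigcup_{m\ge k}g_m^{-1}(F)}.
\end{align*}
For each fixed $k$ the set $\overline F_k$ is closed and contains $g_m^{-1}(F)$ for all $m\ge k$. Hence
\begin{align*}
\limsup_{n}\Prob^*(g_n(Y_n)\in F)\le\limsup_n\Prob^*(Y_n\in\overline F_k)\le\Prob(X\in\overline F_k).
\end{align*}
Letting $k\to\infty$ gives the bound $\Prob(X\in\bigcap_k\overline F_k)$.

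The key inclusion is $\bigcap_k\overline F_k\cap\mathbb D_0\subseteq g^{-1}(F)$. To prove it, take $h$ in the intersection. Then there are indices $m_k\to\infty$ and points $h_k\in g_{m_k}^{-1}(F)$ with $h_k\to h$. Step 1 applied along this subsequence gives $g_{m_k}(h_k)\to g(h)$. Since $F$ is closed, $g(h)\in F$. Because $X\in\mathbb D_0$ almost surely, we conclude $\limsup\Prob^*(g_n(Y_n)\in F)\le\Prob(g(X)\in F)$, which is the portmanteau criterion.

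\textbf{Main obstacle.} The delicate point is making Step 1 usable along subsequences and with outer probabilities. Measurability of $g_n(Y_n)$ is not assumed, so everything must be phrased with $\Prob^*$. The subsequence version of Step 1 is immediate from the definition, because differentiability is stated for arbitrary $t_n\to0$. Tightness of $X$ ensures its law is a Borel measure concentrated on a separable subset, so that $\Prob(X\in\cdot)$ is well defined on the closed sets used above. If one prefers, the whole argument can alternatively be run through a Dudley almost-sure representation of $Y_n\dist X$, after which Step 1 applies pathwise.
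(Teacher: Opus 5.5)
Your proof is correct. The paper gives no proof of its own: it lists the functional delta method in the appendix as a standard tool, and your argument is the standard textbook one, namely unwinding tangential Hadamard differentiability along $t_n=1/r_n$ and feeding the result into the extended continuous mapping theorem via the closed-set portmanteau bound. The only point left implicit is that $\phi'_\theta(X)$ is Borel measurable, which the portmanteau converse needs; this follows because $\phi'_\theta$ is continuous on $\mathbb D_0$ and $X$ takes values in $\mathbb D_0$.
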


\end{document}